\documentclass[
notitlepage
,floatfix,
aps,
pra,
reprint,
superscriptaddress,
longbibliography,
11pt
]{revtex4-2}
\usepackage[utf8]{inputenc}

\usepackage[caption=false]{subfig}
\usepackage{graphicx}
\usepackage{epstopdf}
\usepackage{array}
\usepackage{verbatim}
\usepackage{amsmath,amsfonts,amssymb,amscd,mathtools}
\usepackage{dsfont}
\usepackage{amsthm}
\usepackage{tabularx}
\usepackage{stmaryrd}
\usepackage{enumerate}
\usepackage{booktabs}
\usepackage{physics}

\usepackage{wasysym}
\usepackage{mathrsfs}
\usepackage{microtype}
\usepackage{diagbox}
\usepackage{tikz}
\usetikzlibrary{positioning}

\usepackage[lmargin=.7in,rmargin=.7in,tmargin=.7in,bmargin=1in]{geometry}

\usepackage{newtxtext,newtxmath}

\usepackage{hyperref}
\usepackage[dvipsnames]{xcolor}
\hypersetup{
    bookmarksnumbered=true, 
    unicode=false, 
    pdfstartview={FitH}, 
    pdftitle={}, 
    pdfauthor={}, 
    pdfsubject={}, 
    pdfcreator={}, 
    pdfproducer={}, 
    pdfkeywords={}, 
    pdfnewwindow=true, 
    colorlinks=true, 
    linkcolor=NavyBlue, 
    citecolor=NavyBlue, 
    filecolor=NavyBlue, 
    urlcolor=NavyBlue 
}
\usepackage{aliascnt}
\theoremstyle{plain}
\newtheorem{thm}{Theorem}
\newaliascnt{cor}{thm}
\newtheorem{cor}[cor]{Corollary}
\aliascntresetthe{cor}
\newaliascnt{lem}{thm}
\newtheorem{lem}[lem]{Lemma}
\aliascntresetthe{lem}
\newaliascnt{pro}{thm}
\newtheorem{pro}[pro]{Proposition}
\aliascntresetthe{pro}
\theoremstyle{definition}
\newaliascnt{defn}{thm}
\newtheorem{defn}[defn]{Definition}
\aliascntresetthe{defn}
\newaliascnt{remark}{thm}

\aliascntresetthe{remark}
\newaliascnt{ex}{thm}

\aliascntresetthe{ex}
\newaliascnt{question}{thm}

\aliascntresetthe{question}
\newaliascnt{problem}{thm}

\aliascntresetthe{problem}

\usepackage{cleveref}
\crefname{thm}{Theorem}{Theorems}
\crefname{cor}{Corollary}{Corollaries}
\crefname{lem}{Lemma}{Lemmas}
\crefname{pro}{Proposition}{Propositions}
\crefname{defn}{Definition}{Definitions}
\crefname{remark}{Remark}{Remarks}
\crefname{ex}{Example}{Examples}
\crefname{question}{Question}{Questions}
\crefname{problem}{Problem}{Problems}
\Crefname{equation}{Equation}{Equations}
\crefname{figure}{Fig.}{Figs.}
\Crefname{figure}{Figure}{Figures}
\crefname{equation}{Eq.}{Eqs.}
\Crefname{equation}{Equation}{Equations}
\crefname{figure}{Fig.}{Figs.}
\Crefname{figure}{Figure}{Figures}
\crefname{section}{Sec.}{Secs.}
\Crefname{section}{Section}{Sections}

\usepackage{algorithm}
\usepackage{algorithmicx}
\usepackage{algpseudocode}



\newcommand{\mN}{\mathcal{N}}
\newcommand{\mU}{\mathcal{U}}
\newcommand{\mT}{\mathcal{T}}
\newcommand{\mD}{\mathcal{D}}

\newcommand{\mH}{\mathcal{H}}
\newcommand{\mM}{\mathcal{M}}

\newcommand{\mB}{\mathcal{B}}

\newcommand{\mP}{\mathcal{P}}

\newcommand{\mS}{\mathcal{S}}

\newcommand{\mbD}{\mathbb{D}}

\newcommand{\mbR}{\mathbb{R}}

\newcommand{\mbZ}{\mathbb{Z}}
\newcommand{\mdI}{\mathds{1}}

\newcommand{\ceil}[1]{{\lceil #1 \rceil}}
\newcommand{\floor}[1]{{\left\lfloor #1 \right\rfloor}}

\newcommand{\ba}{\begin{eqnarray}}
\newcommand{\ea}{\end{eqnarray}}
\newcommand{\bann}{\begin{eqnarray*}}
\newcommand{\eann}{\end{eqnarray*}}
\newcommand{\bal}{\begin{equation}\begin{aligned}}
\newcommand{\eal}{\end{aligned}\end{equation}}
\newcommand{\dm}[1]{\ketbra{#1}{#1}}

\newcommand{\sbar}{\;\rule{0pt}{9.5pt}\right|\;}
\newcommand{\lset}{\left\{\left.}
\newcommand{\rset}{\right\}}

\DeclareMathOperator{\poly}{poly}

\DeclareMathOperator{\id}{id}

\usepackage{titlesec}
\titleformat{\paragraph}[block]
  {\normalfont\normalsize\bfseries}
  {\theparagraph}{1em}{}

\titlespacing*{\paragraph}
  {0pt}{3.25ex plus 1ex minus .2ex}{1ex}

\newcounter{protocolstep}[subsubsection]
\renewcommand{\theprotocolstep}{Step~\arabic{protocolstep}}

\newcommand{\stepparagraph}[1]{%
    \refstepcounter{protocolstep}%
    \paragraph*{\theprotocolstep\ of the protocol: #1}%
}

\begin{document}

\title{Universal distillation of quantum entanglement}

\author{Ryuji Takagi}
\email{ryujitakagi@g.ecc.u-tokyo.ac.jp}
\affiliation{Department of Basic Science, The University of Tokyo, 3-8-1 Komaba, Meguro-ku, Tokyo 153-8902, Japan}

\author{Kaito Watanabe}
\email{watanabe715@g.ecc.u-tokyo.ac.jp}
\affiliation{Department of Basic Science, The University of Tokyo, 3-8-1 Komaba, Meguro-ku, Tokyo 153-8902, Japan}
\affiliation{RIKEN Center for Quantum Computing (RQC), Hirosawa 2-1, Wako, Saitama 351-0198, Japan}

\author{Takaya Matsuura}
\email{takayamatsuura@gmail.com}
\affiliation{RIKEN Center for Quantum Computing (RQC), Hirosawa 2-1, Wako, Saitama 351-0198, Japan}

\author{Hayato Arai}
\email{h.arai6626@gmail.com}
\affiliation{Department of Basic Science, The University of Tokyo, 3-8-1 Komaba, Meguro-ku, Tokyo 153-8902, Japan}

\author{Masahito Hayashi}
\email{hmasahito@cuhk.edu.cn}
\affiliation{School of Data Science, The Chinese University of Hong Kong, Shenzhen, Guangdong, 518172, China}
\affiliation{International Quantum Academy, Futian District, Shenzhen 518048, China}
\affiliation{Graduate School of Mathematics, Nagoya University, Chikusa-ku, Nagoya 464–8602, Japan}

\begin{abstract}
Distillable entanglement---the maximum rate of Bell states extractable from the given noisy entangled states via local operations and classical communication (LOCC)---has been a central quantity to characterize the operational value of entangled states. However, this operational meaning had been known to be valid only under a crucial assumption: both parties know the description of the input state and can choose the protocol tailored to it, which does not necessarily reflect the natural operational setting. Here, we show that the distillable entanglement is the fundamental operational quantity even without this assumption. We present a universal LOCC protocol---whose description does not depend on the input state and thus works universally for any input---that achieves the distillable entanglement of an arbitrary unknown input state. Our result therefore lifts the distillable entanglement to a robust notion of an operationally meaningful quantity and clarifies that input-state knowledge is not necessary to achieve the optimal rate of entanglement distillation. Our protocol is based on state merging for compound states, together with the uniform convergence property of the finite-block quantity toward distillable entanglement.   
\end{abstract}

\maketitle

\let\oldaddcontentsline\addcontentsline
\renewcommand{\addcontentsline}[3]{}

\section{Introduction}

Entanglement distillation has been a central problem studied intensively in quantum information theory~\cite{bennett_1996-5,Bennett1996concentrating,Horodecki-review,Devetak2005distillation,hayashi2016mathematical}. 
Its importance can be directly seen by considering that, in any practical operational setting, one cannot avoid noise from occurring in the entangled state in hand. 
To run quantum information processing protocols reliably, it is crucial to prepare pure entangled states from the given noisy entangled states. 
The aim of entanglement distillation is precisely to accomplish this using local operations and classical communication (LOCC)---a natural class of operations accessible to two parties that are physically separated apart. 
From this operational perspective, the key quantity to study is the \emph{distillable entanglement}, defined as the maximum rate of the standard Bell state extractable from the given entangled state by LOCC~\cite{Horodecki-review,hayashi2016mathematical}. 
Although it is notoriously hard to compute in general, it serves as one of the most fundamental entanglement measures that characterizes the operational usefulness of the generic bipartite entangled state. 

However, the operational significance of distillable entanglement is ensured only in a setting with an implicit but crucial assumption---both parties know the complete description of the given state beforehand, allowing them to design distillation protocols specifically tailored to the given state. 
In many cases, this is not a natural assumption because possessing quantum states in the lab---e.g., as an output of quantum computation---does not mean that they have access to the classical description of the state. 
This motivates us to consider \emph{state-agnostic} entanglement distillation without knowing the input state, which can be realized by a protocol that does not depend on the input state. 
This fixed protocol should work \emph{universally} for arbitrary input states and produce the output maximally entangled state, whose size depends on the input state. 
Characterizing the performance of the universal protocol not only removes the unsatisfactory assumption but also provides a deeper understanding of the fundamental operational role played by the information of the input state. 

By definition, the optimal rate of the universal protocol is no larger than the standard distillable entanglement formalized under the state-aware setting. 
Therefore, the best one could hope for is to achieve the distillable entanglement for every input state by a fixed universal protocol, and whether it is possible has been an outstanding open problem~\cite{Horodecki2003entanglement}. 
Along this line, a significant first step was initiated by Ref.~\cite{Matsumoto2007universal}, which solved the case of pure input states, where they showed that the optimal rate characterized by the von Neumann entropy of entanglement can also be achieved universally.
(See also Ref.~\cite{Blume-Kohout2014} for its sequential implementation.)
On the other hand, the understanding of the general mixed-state input is largely limited. 
For the case of unknown Bell diagonal states, a universal version of hashing and recurrence was constructed~\cite{Brun2001entanglement}, and other subclasses of bipartite qubit states have also been discussed~\cite{Horodecki2003entanglement,Kalman2025universal}.
When the given state is promised to belong to the known set of states, it was shown that one can construct a protocol depending on the description of the set---but not on the input state itself---that achieves the minimum coherent information over the set~\cite{Bjelakovic2013universal}. 
Recently, in the relaxed setting where available operations are entanglement non-generating operations~\cite{Brandao-Datta}---which construct a set strictly larger than LOCC---it has been shown that the state-aware optimal rate characterized by the regularized relative entropy of entanglement~\cite{Brandao2010,hayashi_generalized_2025,Lami_2025_gqsl} can also be achieved universally~\cite{lami2026universalquantumresourcedistillation}. 
Although these developments provide significant insights, they only address special cases or different operational settings, and the original problem had still remained unsolved.

Here, we resolve this problem, showing that distillable entanglement can be achieved universally by a state-agnostic LOCC protocol for arbitrary unknown input states.
Our results therefore establish distillable entanglement as a robust operationally meaningful quantity in a much broader setting than it was originally formulated.
To this end, we begin by extending the result of Refs.~\cite{Bjelakovic2013universal,Boche2014resource} and obtain a universal distillation protocol achieving the hashing bound~\cite{Devetak2005distillation,Horodecki2007quantum}---a known lower bound for the distillable entanglement. 
This not only provides an important subroutine to achieve our main goal but also offers a simple and concrete protocol. 
We then extend this to the protocol achieving the distillable entanglement. 
Our approach is to focus on one formal expression of distillable entanglement represented as a limit of a series of functions characterized by coherent information. 
We find that the main subtlety in constructing the universal distillation protocol lies in the mathematical property of this convergent series, particularly its uniform convergence. 
We show the uniform convergence for an arbitrary compact subset of full-rank distillable states and devise the protocol that employs this property.  
We further apply our construction to obtain a general lower bound on the quantum capacity of unknown channels with LOCC assistance, which becomes optimal for teleportation covariant channels~\cite{pirandola_2017,Kaur2017}.   

Beyond the contributions to entanglement theory, our results can be positioned in a rich landscape of relevant subfields. 
The idea of universal protocols has been discussed in various forms in the context of Shannon theory, on topics such as source and channel coding~\cite{Wolfowitz1960Simultaneous,Blackwell1959,Lempel-Ziv,ZivLempel1978,Csiszar2011information,Josza1998universal,Josza2003,Hayashi2002quantum,Bennett2006universal,Hayashi2002simpleconstructionquantumuniversal,Bjelakovic2008quantum,Bjelakovic2009,Bjelakovic2009Classical,Berta_2017_compound_entanglement_assisted,Boche_entanglement-assisted_2017,Hayashi2010,hayashi_2009-2,Hayashi_2022_universal_cq_superposition,matsuura2025universalclassicalquantumchannelresolvability,watanabe2026needuniversalcorrelationdetector}, and hypothesis testing~\cite{brandao_adversarial,Brandao2010,berta_composite,Bjelakovic2005,Noetzel2014,hayashi_generalized_2025,Lami_2025_gqsl,Fang2026,Fang_2026_error_exponent,lami2026universalquantumresourcedistillation,Dasgupta2025}. 
The extension of the notion of universal protocols to quantum resource distillation has seen recent developments, mainly in the general resource-theoretic setting with axiomatic free operations~\cite{Fang2026,lami2026universalquantumresourcedistillation,Lin2026FiniteBlocklength}, as well as in specific resource theories equipped with operationally motivated free operations~\cite{watanabe_2024,Watanabe_universal,faist2026universalthermodynamicimplementationprocess}.
Our work adds an item to the last category but is inspired by the Shannon-theoretic communication task, suggesting a potential for leveraging and extending information-theoretic techniques to obtain further insights into the ultimate capability of universal resource manipulation.

\section{Preliminaries}
\subsection{Entanglement distillation}

Here, we review the standard setting of entanglement distillation from known i.i.d. states. 
Let $\mD(\mH)$ denote the set of all states acting on the Hilbert space $\mH$.  
We consider the setting where two parties, Alice and Bob, possess a bipartite state $\rho_{AB}\in\mD(\mH_{AB})$, where $\mH_{AB}=\mH_A\otimes \mH_B$ with $\mH_A$ and $\mH_B$ being finite-dimensional Hilbert spaces with dimensions $d_A$ and $d_B$. 
Entanglement distillation aims to prepare as many maximally entangled state $\Phi_{AB}$ as possible using LOCC.
The distillable entanglement $E_d(\rho_{AB})$ of the state $\rho_{AB}$ is particularly defined by the maximum rate $r$ such that there exists a series $\{\Lambda_n\}_n$ of LOCC operations such that $\Lambda_n(\rho_{AB}^{\otimes n})$ approaches $\Phi_{AB}^{\otimes \floor{rn}}$ in the asymptotic limit $n\to\infty$.

As can be seen in the definition, the distillable entanglement is given as a result of the optimization over all possible LOCC protocols \emph{for the given $\rho_{AB}$}. 
This particularly means that the optimal series $\{\Lambda_n\}_n$ of LOCC protocols can generally depend on the description of $\rho_{AB}$.
In other words, this is the ultimate performance under the assumption that the experimenter has complete knowledge about the description of $\rho_{AB}$.

The distillable entanglement is a fundamental quantifier of entanglement that directly reflects the operational significance of the entanglement contained in the given state. 
On the other hand, its closed form is not known, and computing $E_d$ is generally a hard problem.
This also motivates us to obtain a more tractable achievable lower bound that applies to an arbitrary state $\rho_{AB}$. 
The standard tractable lower bound is the hashing bound~\cite{Devetak2005distillation,Horodecki2007quantum}, giving 
\bal
 E_d(\rho_{AB})\geq I_c(A|B)_{\rho_{AB}}
 \label{eq:hashing bound main}
\eal
where $I_c(A|B)_{\rho_{AB}} \coloneqq H(\rho_B)-H(\rho_{AB})$ is the coherent information, and $H(\rho)\coloneqq-\Tr(\rho\log\rho)$ is the von-Neumann entropy of the state $\rho$.
In fact, the hashing bound can be achieved by one-way LOCC, an LOCC operation that involves only classical communication from Alice to Bob. 
One of the specific protocols achieving this bound plays a central role in our result, and it is based on the task known as state merging, which we review in the following.

\subsection{State merging}

State merging is a quantum information processing task where Alice aims to transfer her part of the known given state to Bob while maintaining the correlation with the environment by LOCC and entanglement~\cite{Horodecki2007quantum,Horodecki2005a}.
More specifically, they are given the initial state $\rho_{AB}^{\otimes n}$ with purification $\psi_{ABE}^{\otimes n}$.
They also begin with the initial entanglement $\Phi_{d_{\rm ini}}$ and end up with the final entanglement $\Phi_{d_{\rm out}}$.
Then, if there exists a sequence $\{\Lambda_n\}_n$ of LOCC operations such that
\bal
 \norm{\Lambda_n(\psi_{ABE}^{\otimes n}\otimes\Phi_{d_{\rm ini}^{(n)}})-\psi_{\tilde B BE}^{\otimes n}\otimes \Phi_{d_{\rm out}^{(n)}}}_1\xrightarrow[n\to\infty]{} 0
\eal
where $\norm{\cdot}_1$ is the trace norm, we say that $\limsup_{n\to \infty}\frac{\log d_{\rm ini}^{(n)}-\log d_{\rm fin}^{(n)}}{n}$ is an achievable entanglement cost for state merging. 

It is known that the optimal entanglement cost is given by the conditional entropy $H(A|B)_{\rho_{AB}} = H(\rho_{AB})-H(\rho_B)$, which can be achieved by a measurement on Alice's side, followed by Bob's conditional operation.
Interestingly, this also encompasses the case where the conditional entropy is negative.
In this case, we can rather extract entanglement with the rate of coherent information $I_c(A|B)_{\rho_{AB}}=-H(A|B)_{\rho_{AB}}$.
Moreover, a close look at the construction in Ref.~\cite{Horodecki2007quantum} reveals that when $I_c(A|B)_{\rho_{AB}}>0$, one can start with no initial entanglement, i.e., $d_{\rm ini}=1$.
This shows that, as a ``byproduct'' of the state merging task, one can realize entanglement distillation with input state $\rho_{AB}$ without initial entanglement with the rate $I_c(A|B)_{\rho_{AB}}$ of distilled entanglement.

Crucially, the result in Ref.~\cite{Horodecki2007quantum} assumes that Alice and Bob know the description of the state beforehand.
The relevant setting of our interest is the case when the description of $\rho_{AB}$ is not known.
An extension to the setting where the full description of $\rho_{AB}$ is not available was considered in Refs.~\cite{Bjelakovic2013universal,Boche2014resource,Colomer2024}. 
They considered the setting where the state $\rho_{AB}$ is taken from the known subset $\mS$ of quantum states.
Namely, the experimenter has access to \emph{partial} information about the state $\rho_{AB}$ in the sense that they know the base set to which the given state belongs.
In this setting, they investigated the required entanglement cost so that there exists a fixed protocol that accomplishes the merging task for every state in $\mS$.
Their result shows that the entanglement cost is characterized by the maximum conditional entropy of the state in the set.

\begin{lem}[State merging with partial information~\cite{Bjelakovic2013universal}]\label{lem:universal state merging main}
    Let $\mS$ be an arbitrary set of bipartite states on the given system $AB$. Then, there exists a series $\{\Lambda_n\}_n$ of one-way LOCC protocol from Alice to Bob and $\{d_{\rm ini}^{(n)}\}_n$, $\{d_{\rm out}^{(n)}\}_n$ such that, for every input state $\rho_{AB}$ in $\mS$ with purification $\psi_{ABE}$, it holds that
    \bal
     \norm{\Lambda_n(\psi_{ABE}^{\otimes n}\otimes\Phi_{d_{\rm ini}})-\psi_{\tilde B BE}^{\otimes n}\otimes \Phi_{d_{\rm out}}}_1\xrightarrow[n\to\infty]{}0
     \label{eq:state merging partial info fidelity convergence main}
    \eal
    with
    \bal
     \frac{\log d_{\rm ini}^{(n)}-\log d_{\rm out}^{(n)}}{n} \xrightarrow[n\to\infty]{} \sup_{\rho_{AB}\in\mS} H(A|B)_{\rho_{AB}}.
    \eal
When $\sup_{\rho_{AB}\in\mS}H(A|B)_{\rho_{AB}}<0$, no initial entanglement is required, i.e., $d_{\rm ini}^{(n)}=1$ suffices.
\end{lem}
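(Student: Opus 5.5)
Since the merging protocol must depend only on $\mS$, the plan is to reduce to a finite approximating family and then run one random-measurement merging protocol that works simultaneously for all members, in the style of the one-shot decoupling approach of Ref.~\cite{Horodecki2007quantum}. First, a net argument. The set of states on $AB$ is compact, so for each $n$ I fix a finite $\tau_n$-net $\mS_n$ of $\mS$ in trace distance, with $|\mS_n|\le (c/\tau_n)^{2d_A^2d_B^2}$, and take $\tau_n$ decaying exponentially in $n$ but slowly enough that $\log|\mS_n|$ stays polynomial in $n$. By Fuchs--van de Graaf and Uhlmann, any $\rho\in\mS$ has a purification close to that of its net point. Since $\norm{\rho^{\otimes n}-\sigma^{\otimes n}}_1\le n\norm{\rho-\sigma}_1$, it suffices to achieve exponentially small error on $\mS_n$. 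The continuity of $H(A|B)$ (Alicki--Fannes--Winter) keeps the supremum over $\mS_n$ within $o(1)$ of $\sup_{\mS}H(A|B)$.

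Second, the merging protocol. Alice applies a Haar-random unitary on $A^n$, or on $A^n$ together with her share of $\Phi_{d_{\rm ini}}$, followed by a measurement in a basis partitioned into $L$ blocks of dimension $d_{\rm out}$. She broadcasts the block index to Bob. The decoupling theorem bounds the expected distance of the post-measurement state on the remaining $A$ register and $E^n$ from $\pi\otimes\psi_E^{\otimes n}$ by roughly $2^{-\frac12(H_{\min}(A^n|E^n)+\log L - \log d_{\rm out})}$, after smoothing. Because the complementary system is Bob's, $H_{\min}(A|E)$ is dual to $-H_{\max}(A|B)$. Using the smoothed quantities via the quantum AEP, or more robustly by projecting onto typical subspaces of $\rho_B^{\otimes n}$ and $\rho_{AB}^{\otimes n}$ for each $\rho\in\mS_n$, the error is at most $2^{-n\delta}$ whenever the net entanglement cost $\frac1n(\log d_{\rm ini}-\log d_{\rm out})\ge \max_{\mS_n}H(A|B)+2\delta$. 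Once Alice's side is decoupled, Uhlmann's theorem gives Bob an isometry, depending on the outcome and on the state, that recovers $\psi_{\tilde BBE}^{\otimes n}\otimes\Phi_{d_{\rm out}}$.

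Third, universality, which I expect to be the main obstacle. Alice's random unitary and measurement do not depend on the state, but Bob's Uhlmann isometry does. I would handle this with the compound-channel trick: the averaged bound, summed over $\mS_n$, is at most $|\mS_n|2^{-n\delta}\to0$, so a single unitary is good for all members of $\mS_n$ simultaneously. For Bob's decoder, I would view the task as quantum-channel coding for the compound channel from Alice's measurement outcome to Bob, and use the uniform average $\bar\rho=\frac{1}{|\mS_n|}\sum_i \psi_i^{\otimes n}$ together with the environment labeled by $i$ (a classical register appended to $E$). Decoupling from $E\otimes$label costs only $\log|\mS_n|=o(n)$ extra in min-entropy. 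Uhlmann's theorem applied to this averaged state then yields one isometry that works for every $i$, with error at most $|\mS_n|$ times the average error, which still vanishes. Finally, when $\sup_{\mS}H(A|B)<0$ one can set $d_{\rm ini}=1$, since the decoupling condition then already holds with $\log d_{\rm out}\approx n(-\sup H(A|B)-2\delta)>0$. Letting $\delta\to0$ slowly then gives the stated rate.
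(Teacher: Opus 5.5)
The paper gives no proof of this lemma. It imports it from \cite{Bjelakovic2013universal}, and in the exponential form of \cref{lem:universal state merging} from \cite{Boche2014resource}, adding only the remark that the exponent depends on $\delta$ and $d_{AB}$ alone. Your reconstruction is, in substance, the compound-source argument of those works, and it is sound once the points below are fixed. Its ingredients are a finite net, a state-independent Haar-random unitary plus rank-$d_{\rm out}$ projective measurement on Alice's side, and a single Uhlmann decoder for Bob built from the uniformly averaged source with the label in the environment, paying a factor $|\mS_n|$ to return to individual states. Compared with the black-box citation, your route shows why $n_0$ and $c$ depend only on $\delta$ and $d_{AB}$: the net size is controlled by a covering number of the full state space, and the entropy estimates are uniform in the state. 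That uniformity is a requirement, not a bonus. Because $\mS_n$ changes with $n$, you must use AEP/typicality bounds whose error terms depend only on $d_A,d_B$; the fully quantum AEP and standard typical-projector bounds qualify.

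\textbf{(i) The one-shot bound is misstated.} For Bob's decoder you need $A_1$ decoupled from $E^n$ conditioned on the outcome $j$. The decoupling theorem gives
$\sum_j p_j\norm{\Omega^j_{A_1E^n}-\pi_{A_1}\otimes\psi_E^{\otimes n}}_1\lesssim 2^{-\frac12(H_{\min}(A^nA_0|E^n)-\log d_{\rm out})}$,
where $A_0$ is Alice's half of $\Phi_{d_{\rm ini}}$ and $H_{\min}(A^nA_0|E^n)=H_{\min}(A^n|E^n)+\log d_{\rm ini}$. The number of blocks $L$ does not enter. Your expression with $+\log L$ is the bound for the outcome-averaged residual, which does not suffice. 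Taken literally it gives the wrong rate: for $\dm{0}_A\otimes\rho_B$ it would allow $\log d_{\rm out}\approx\frac n2\log d_A$. The rate condition you then state is the correct one, so only the formula needs fixing.

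\textbf{(ii) Net size.} Exponentially small $\tau_n$ gives $\log|\mS_n|=\Theta(n)$, which contradicts your later use of $\log|\mS_n|=o(n)$. For the lemma as stated, $\tau_n=n^{-2}$ suffices. Uhlmann's theorem and multiplicativity of fidelity put the $n$-fold purifications within trace distance $O(\sqrt{n\tau_n})$. If you want the exponential error of the appendix version, take $\tau_n=2^{-an}$ with $a\,d_A^2d_B^2\ll\delta$, so that both the $\log|\mS_n|$ entropy penalty and the factor $|\mS_n|$ are absorbed into $\delta$.

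\textbf{(iii) The label register.} For a single decoder, the label must sit coherently in the purification $\frac{1}{\sqrt N}\sum_i\ket{\psi_i}^{\otimes n}\ket{i}_I$, so that $A^nB^n$ purifies $E^nI$. A classical label would force a second copy into the environment anyway. The relevant quantity is then
$H_{\min}(A^n|E^nI)=-H_{\max}(A^n|B^n)_{\bar\rho}\geq\min_iH_{\min}(A^n|E^n)_{\psi_i^{\otimes n}}-\log N$.
This follows from $F(\sum_ip_i\rho_i,\sigma)\leq\sum_i\sqrt{p_i}F(\rho_i,\sigma)$, with the smoothed version obtained via joint concavity of fidelity. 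Dephasing $I$ at the end turns the averaged guarantee into $\frac1N\sum_i$ of the individual errors. With this in place, your separate union bound for Alice's unitary is redundant.
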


\cref{lem:universal state merging main} particularly means that entanglement distillation from an unknown state from the set $\mS$ can be accomplished with the rate $\min_{\rho_{AB}\in\mS}I_c(A|B)_{\rho_{AB}}$.

\section{Universal entanglement distillation}

\subsection{Setting}

We now formalize the setting of universal entanglement distillation. 
Suppose that Alice and Bob are given i.i.d. copies of the state $\rho_{AB}$, but they do not know its description.
All they can do in this setting is to follow some strategy that does not depend on $\rho_{AB}$, aiming to obtain as many entanglement bits as possible.

More formally, for a given i.i.d. series $\qty{\rho_{AB}^{\otimes n}}_n$ of the unknown input state $\rho_{AB}$, they apply some series $\{\Lambda_n\}_n$ of LOCC maps. 
Here, we are interested in the protocol such that, after the distillation process, Alice and Bob get to know how much entanglement has been extracted. 
This necessarily involves measurement, integrated into the LOCC strategy. 
This means that each LOCC protocol $\Lambda_n$ is written as an instrument $\Lambda_n = \sum_x \Lambda_n^x$ where each $\Lambda_n^x$ is a completely positive trace nonincreasing map. 
Alice and Bob aim to construct a protocol such that, given the measurement outcome $x$, the output state is close to $\Phi_{AB}^{\otimes \floor{nr_x^{(n)}}}$ for some $r_x^{(n)}$.
Namely, the rate $r_x^{(n)}$ of the target entangled state varies depending on the outcome $x$ and thus is a random variable.
Such a setting is known as variable-length coding in the context of data compression~\cite{Lempel-Ziv,ZivLempel1978,Hayashi2002quantum,Hayashi2002simpleconstructionquantumuniversal}, and here we employ this formulation for entanglement distillation~\cite{Matsumoto2007universal}.

In this setting, the error in the transformation can be characterized by the average error
\bal
 \varepsilon_{n}(\rho_{AB}) \coloneqq \sum_x p_x \left\|\frac{\Lambda_n^x(\rho_{AB}^{\otimes n})}{p_x}-\Phi_{AB}^{\otimes \floor{nr_x^{(n)}}}\right\|_1
\eal
where $p_x\coloneqq \Tr\qty[\Lambda_n^x(\rho_{AB}^{\otimes n})]$ is the probability of obtaining the outcome $x$. 
Our problem reduces to finding the rate function $r_x^{(n)}$ such that $\varepsilon_n(\rho_{AB})$ vanishes asymptotically for \emph{every} $\rho_{AB}$.
This motivates us to define an achievable rate of universal distillation as follows. 

\begin{defn}[Universally distillable entanglement] \label{defn:universal distillable entanglement main}
A function $R:\mD(\mH_{AB})\to\mbR_{\geq 0}$ is an achievable rate of universal entanglement distillation if, for an arbitrary $\delta>0$, there exists a series $\{\Lambda_n\}_{n}$ of an LOCC instrument $\Lambda_n=\{\Lambda_n^x\}_x$ and a series $\{r_x^{(n)}\}_{n,x}$ of real numbers such that the average error vanishes for every given state $\rho_{AB}$: 
\bal
 \varepsilon_{n}(\rho_{AB}) \xrightarrow[n\to\infty]{}0,\quad \forall\rho_{AB}\in\mD(\mH_{AB})
 \label{eq:error condition main}
\eal
and the probability of realizing the target rate $r_x^{(n)}$ smaller than $R(\rho_{AB})-\delta$ vanishes asymptotically:
\bal
 \Pr_x\qty[r_x^{(n)}<R(\rho_{AB})-\delta]\xrightarrow[n\to\infty]{}0,\quad \forall\rho_{AB}\in\mD(\mH_{AB}).
 \label{eq:overflow probability main}
\eal
\end{defn}

In the state-aware setting, Alice and Bob do not choose different target sizes, and thus do not need to consider instruments with various outcomes $x$.
Then, the standard definition of the state-aware distillable entanglement is recovered by removing the constraint $\forall\rho_{AB}\in\mD(\mH_{AB})$ in \eqref{eq:error condition main} and \eqref{eq:overflow probability main} and taking the supremum of the achievable rate.
Therefore, it is clear that any achievable rate $R$ of universal distillation satisfies 
\bal
 R(\rho_{AB})\leq E_d(\rho_{AB}),\quad \forall \rho_{AB}.
\eal
Our main result is to show that this equality is achievable at every state $\rho_{AB}$, i.e., the function $E_d$ is an achievable universal entanglement distillation rate. 

One might think that the equality should be achieved by the following simple protocol: use the first $o(n)$ copies of the input states to learn its description by state tomography and apply the protocol tailored to the estimated state to the rest of the input copies.  
Although this learn-and-apply protocol is powerful---and our construction indeed morally takes this path---this argument does not generally stand.
This is because of the potential blow-up of an error in the application stage.
The sample complexity of state tomography~\cite{ODonnell2016-1,Haah2017} implies that the use of $o(n)$ copies can learn the state with error $\epsilon=\omega(1/\sqrt{n})$ in trace distance, but then the protocol tailored to this imprecise estimation could cause errors up to $\epsilon (n-o(n))=\omega(\sqrt{n})$.
Therefore, one needs to employ a further structure of the setting to avoid this blow-up, which we accomplish here for entanglement distillation. 

We also remark that, even in the state-aware setting, computation of $E_d$ or constructing the optimal distillation strategy achieving $E_d$ is practically intractable except for a limited class of states.
Therefore, universal protocols can naturally be highly complex and may not be computationally tractable either. 
Our focus here instead is an information-theoretic existence statement, where the LOCC distillation strategy must be fixed independently of the unknown input state.

\subsection{Achieving hashing bound}

We begin by showing that coherent information---serving as a general lower bound for distillable entanglement as in \eqref{eq:hashing bound main}---can be achieved universally.  

\begin{pro}\label{pro:hashing bound universal main}
The coherent information $I_c(A|B)_{\rho_{AB}}$ is an achievable rate of universal entanglement distillation. 
\end{pro}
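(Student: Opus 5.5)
We will combine a coarse state-estimation stage on a vanishing fraction of the copies with the compound state merging of \cref{lem:universal state merging main}, applied to a finite family of sets of states. Fix $\delta>0$. Since $\mD(\mH_{AB})$ is compact and $I_c(A|B)$ is continuous (Fannes--Audenaert), we can cover $\mD(\mH_{AB})$ by finitely many closed sets $\mS_1,\dots,\mS_K$ of trace-distance diameter at most some $\eta$. We choose $\eta$ so that $|I_c(A|B)_\rho-I_c(A|B)_\sigma|<\delta/2$ whenever $\|\rho-\sigma\|_1\le 3\eta$. For each $k$ we also form the enlarged set $\mS_k^{+}\coloneqq\{\sigma:\ \|\sigma-\rho\|_1\le\eta \text{ for some }\rho\in\mS_k\}$. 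Whenever $\min_{\sigma\in\mS_k^+}I_c(A|B)_\sigma>0$, \cref{lem:universal state merging main} gives a one-way LOCC sequence $\Lambda^{(k)}_m$ with $d_{\rm ini}=1$. This sequence distills at rate $r_k\coloneqq\min_{\sigma\in\mS_k^+}I_c(A|B)_\sigma-\delta/4$ and has error tending to zero for every $\sigma\in\mS_k^+$. For the remaining $k$ we set $r_k=0$ and output nothing.

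The protocol is as follows. Alice and Bob use the first $m_n=o(n)$ copies, with $m_n\to\infty$, for local tomography by LOCC. Informationally complete local measurements plus communication suffice to estimate $\rho_{AB}$ to trace-norm accuracy $\eta/2$ with failure probability tending to zero for every $\rho_{AB}$, with uniform bounds by Hoeffding. The estimate $\hat\rho$ selects some $k$ with $\hat\rho\in\mS_k$, and this $k$ is the classical outcome $x$. They then apply $\Lambda^{(k)}_{n-m_n}$ to the remaining copies and declare $r_x^{(n)}=r_k(n-m_n)/n$.

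We then verify the two conditions of \cref{defn:universal distillable entanglement main} for a fixed $\rho_{AB}$. With probability $\to1$, the estimate satisfies $\|\hat\rho-\rho\|_1\le\eta/2$. In that case the selected $k$ has $\rho\in\mS_k^+$, so the merging protocol for that $k$ works on $\rho^{\otimes(n-m_n)}$. Because only finitely many $k$ occur, the conditional errors, summed over these good outcomes, tend to zero. The remaining outcomes contribute at most $2\Pr[\text{bad}]\to0$. One subtlety is that the remaining copies are untouched by the tomography measurement, since the state is a product. The post-measurement state conditioned on $x$ is therefore exactly $\rho^{\otimes(n-m_n)}$, and the error splits cleanly. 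For the rate, every $\sigma\in\mS_k^+$ lies within $\eta+\eta/2+\eta\le3\eta$ of $\rho$, so on good outcomes $r_k\ge I_c(A|B)_\rho-\delta/2-\delta/4$, and $(n-m_n)/n\to1$. If $I_c(A|B)_\rho\le\delta$ the rate condition holds trivially. Otherwise, shrinking $\eta$ if needed so that the deviation is at most $\delta/2$, the set $\mS_k^+$ has positive minimum coherent information, so zero-initial-entanglement merging applies. Hence $\Pr[r_x^{(n)}<I_c(A|B)_\rho-\delta]\to0$.

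The main obstacle is the dependence on $\delta$: the definition allows a separate protocol for each $\delta$, so the construction above suffices. We must still make sure that the enlarged sets $\mS_k^+$ handle estimation errors near the boundaries of the cover. We must also make sure that the lemma's convergence in \eqref{eq:state merging partial info fidelity convergence main}, which is pointwise on each set, is only ever needed for the fixed true state, and it is. Finally, output dimensions $d_{\rm out}$ need not equal exact powers of $2$, so we truncate the output to $\Phi^{\otimes\lfloor nr_x^{(n)}\rfloor}$ by a local partial trace onto a maximally entangled subspace. This truncation does not increase the trace distance.
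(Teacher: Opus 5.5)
Your proof is correct, and it takes a genuinely different route from the paper's. The paper lets the tomography accuracy shrink with $n$ ($\epsilon_n=\eta_n=n^{-1/3}$). Bob then sends Alice the estimate $\widehat I_x$ of the coherent information to precision $\epsilon'_n$, which costs $O(\log n)$ bits of back-communication. They run compound state merging on the level set $\{\sigma:\abs{I_c(A|B)_\sigma-\widehat I_x}\le\epsilon'_n\}$, which changes with both $x$ and $n$. Bounding $\max_x$ of the merging error over this moving family of sets needs the quantitative, set-independent form in \cref{lem:universal state merging}: error $2^{-nc(\delta,d_{AB})}$ for $n>n_0(\delta,d_{AB})$.

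You instead fix, for each $\delta$, a finite cover at scale $\eta(\delta)$ obtained from uniform continuity of $I_c$, so only finitely many fixed sets $\mS_k^+$ ever occur. This buys you three things:
\begin{itemize}
\item the pointwise convergence of \cref{lem:universal state merging main}, applied to the true state $\rho$ for each of finitely many $k$, is all you need;
\item the tomography only needs constant accuracy $\eta/2$, with any sublinear $m_n\to\infty$;
\item the back-communication is a constant $\ceil{\log K}$ bits.
\end{itemize}
The price is that your estimation slack is tied to a fixed discretization, whereas the paper's vanishes with $n$ and leaves only the merging parameter $\delta/2$. This does not matter for \cref{defn:universal distillable entanglement main}, which allows the protocol to depend on $\delta$. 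The paper's route does preview the structure used for \cref{thm:universal distillable entanglement main}, where the block dimension varies with the estimate and a set-independent merging bound is genuinely required.

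Two small repairs are needed. First, set $r_k=\max\{\min_{\mS_k^+}I_c(A|B)-\delta/4,\,0\}$, since otherwise $r_k<0$ when $0<\min I_c<\delta/4$. Second, obtain $\Phi_{AB}^{\otimes\floor{nr_x^{(n)}}}$ from $\Phi_{d_{\rm out}}$ by deterministic LOCC via Nielsen's majorization criterion, or by choosing $d_{\rm out}$ a power of two. A local partial trace works only when the target dimension divides $d_{\rm out}$.
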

Even though the coherent information is the suboptimal entanglement distillation rate, it turns out that \cref{pro:hashing bound universal main} gives us a key observation toward constructing the universal entanglement distillation protocol achieving the distillable entanglement $E_d$ in the subsequent discussion.
In addition, the protocol achieving the coherent information is much simpler and tractable than the case of $E_d$, which has its own merit.
 
Fortunately, the core technical difficulty is already addressed in \cref{lem:universal state merging main}, which shows that the worst state-aware performance can be achieved in a state-agnostic manner. 
We lift this powerful result to the universal entanglement distillation protocol by combining it with local estimation of coherent information.
Alice and Bob begin by making a local informationally complete measurement on a sublinear number of input copies.
Alice then sends her outcomes to Bob, allowing him to get an estimate of the input state.
From this estimate, he also obtains an estimate of the coherent information $I_c(A|B)_{\rho_{AB}}$.
Bob sends the estimated value of coherent information back to Alice by sending $O(\log d_A + \log d_B + \log n)$ bits of classical information.
This allows them to consider the set $\mS$ of states whose coherent information is consistent with the estimated value within the accuracy ensured by state tomography.  
They run state merging with partial information in \cref{lem:universal state merging main} with this choice of $\mS$.
This allows them to distill entanglement at the rate of the minimum coherent information over the states in $\mS$.
Because of the continuity of the coherent information, by using a sufficiently large, but sublinear, number of input copies for the estimation of coherent information, the rate arbitrarily close to $I_c(A|B)_{\rho_{AB}}$ can be achieved.
A detailed proof can be found in Appendix~\ref{app:hashing universal}.

The state-aware protocol achieving the hashing bound only involves one-way LOCC, while our universal protocol involves $O(\log d_A + \log d_B + \log n)$ backward communication from Bob to Alice. 
Therefore, our protocol is not strictly one-way, but ``almost'' one-way in the sense that the \emph{rate} of backward communication still vanishes in the asymptotic limit.

\subsection{Achieving distillable entanglement}

Let us now present our main result.

\begin{thm}\label{thm:universal distillable entanglement main}
The distillable entanglement $E_d(\rho_{AB})$ is an achievable rate of universal entanglement distillation.
\end{thm}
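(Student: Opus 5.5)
The plan is to use the known expression of distillable entanglement as a limit of regularized coherent information under LOCC preprocessing. Write
\bal
 E_d(\rho_{AB}) = \lim_{k\to\infty} f_k(\rho_{AB}),\qquad f_k(\rho_{AB})\coloneqq \frac{1}{k}\sup_{\Gamma\in{\rm LOCC}_k} I_c(A|B)_{\Gamma(\rho_{AB}^{\otimes k})}.
\eal
Here the supremum runs over LOCC maps $\Gamma$ acting on $k$ copies with bounded output dimension; the output may include classical flags, so $I_c$ is averaged over the outcomes. Each $f_k$ is achieved, up to $1/k$ corrections, by first applying $\Gamma$ blockwise and then running hashing. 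The protocol then has three stages. First, Alice and Bob spend $o(n)$ copies on local informationally complete measurements with forward and backward classical communication, and both obtain an estimate $\hat\rho$ within trace distance $\epsilon_n\to 0$ with probability tending to one. Second, from $\hat\rho$ they choose a block length $k$ and an LOCC map $\Gamma$ that is near-optimal for $f_k(\hat\rho)$. Third, they apply $\Gamma$ to blocks of $k$ copies of the remaining input and run the universal hashing protocol of \cref{pro:hashing bound universal main} on the resulting i.i.d. state $\Gamma(\rho^{\otimes k})$, which is itself unknown. Because that protocol is universal, only the rate depends on the estimate; the error does not. This is how the error blow-up in the naive learn-and-apply scheme is avoided: the estimate never enters the fidelity-critical step. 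It only selects a preprocessing map, and the hashing stage adapts to the actual state $\Gamma(\rho^{\otimes k})$.

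The rate analysis then needs two estimates. One is continuity of $\rho\mapsto \frac1k I_c(A|B)_{\Gamma(\rho^{\otimes k})}$ for a fixed $\Gamma$. The Alicki--Fannes--Winter bound gives a modulus of order $k\epsilon\log d$ after dividing by $k$, so it suffices to take $k$ fixed while $\epsilon_n\to0$. The other, which is the real issue, is choosing $k$ from the estimate so that $f_k(\rho)$ is within $\delta$ of $E_d(\rho)$ for the \emph{true} $\rho$. Pointwise convergence of $f_k$ does not let one pick $k$ from an approximate state. I therefore plan to prove uniform convergence of $f_k\to E_d$ on compact subsets of full-rank distillable states. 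The approach is as follows. The sequence $k f_k$ is superadditive, so $f_k\to\sup_k f_k$ in the Fekete sense, up to rounding. Each $f_k$ is continuous, via continuity of coherent information with dimension-bounded outputs; one may restrict to a bounded output dimension at fixed $k$ by compactness. On full-rank states $E_d$ should be continuous, because mixing with a small amount of any state can be compared through $\rho\geq \lambda_{\min}\sigma$-type dominance. Dini's theorem on compact sets then gives uniform convergence, provided $E_d$ is continuous there. Establishing this continuity, or an equivalent equicontinuity, is what I expect to be the main obstacle. My first attempt will use the lower semicontinuity of $\sup_k f_k$, which is automatic, together with an upper semicontinuity argument on full-rank states. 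That argument writes $\rho'=(1-t)\rho+t\tau$ for nearby states, with $t$ small relative to $\lambda_{\min}(\rho)$, and uses convexity-like behaviour of distillation under flagged mixtures, applied after discarding the flag.

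To reach every state, I will treat the remaining cases separately. States with $E_d=0$ need no work, since the target is $0$. States that are not full rank, or that lie near the boundary, are handled by a truncation family. For each $m$, the protocol fixes the compact set $K_m$ of states with $\lambda_{\min}\geq 1/m$, together with a uniform choice $k_m(\delta)$. If the estimate lies near $K_m$ the protocol uses $k_m$; otherwise it falls back to hashing or to a slowly growing $k$. A diagonal argument lets $m\to\infty$ slowly with $n$. For rank-deficient $\rho$, one can apply a local isometry onto supports estimated by tomography, with robust thresholding, so that $\rho$ becomes full rank on a smaller space, and then apply the same argument there. Finally, the overflow probability \eqref{eq:overflow probability main} is controlled by the tomography failure probability together with the convergence in \cref{pro:hashing bound universal main}. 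The error \eqref{eq:error condition main} is bounded by the hashing-stage error plus the tomography failure probability. Both vanish for every $\rho_{AB}$, so $E_d$ is achievable.
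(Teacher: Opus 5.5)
Your architecture matches the paper's: tomography, blockwise near-optimal LOCC preprocessing, a universal coherent-information stage, and uniform convergence on compact subsets of full-rank distillable states. But your treatment of rank-deficient inputs fails, and this is where the paper needs a genuinely new ingredient. A local isometry onto the estimated supports of $\rho_A$ and $\rho_B$ cannot make $\rho_{AB}$ full rank, because the support of a bipartite state is in general not a product subspace. For example, $\rho_{AB}=\frac12\Phi^+ +\frac12|01\rangle\langle 01|$ on two qubits has full-rank marginals but rank $2$. It is NPT, hence $E_d(\rho_{AB})>0$, yet $I_c(A|B)_{\rho}=h(1/4)-1<0$. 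For such an input the confidence ball always contains the true rank-deficient state, so it never lies in any compact set of full-rank states. Your uniform-convergence step therefore never applies, and the hashing fallback gives rate $0$. The missing idea in the paper is to \emph{manufacture} full rank. The remaining copies are deliberately depolarized, $\rho\mapsto(1-\xi_k)\rho+\xi_k\mdI/(d_Ad_B)$, which is LOCC via shared randomness. One then shows $E_d(\rho_{\xi})\to E_d(\rho)$ as $\xi\to 0$, using lower semicontinuity of $E_d=\sup_{m,\Lambda}\frac1m I_c$ together with LOCC monotonicity. Finally, $\xi$ is chosen adaptively as the smallest value (largest $k\le n$) that passes a list of purely classical feasibility tests, and this choice is shown to force $\xi_{k_n}\to 0$ for every input.

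Two further steps also need repair. First, your sets $K_m=\{\lambda_{\min}\ge 1/m\}$ contain undistillable states. Uniform convergence on $K_m$ would force upper semicontinuity of $E_d$ at full-rank undistillable states bordering distillable ones: if $\sigma_j\to\sigma_0$ with $E_d(\sigma_j)\ge c>0$, continuity of each $f_k$ gives $f_k(\sigma_j)\to f_k(\sigma_0)=0$. That semicontinuity is not known. Your flagged-mixture bound can only be closed by spending ebits distilled from $\rho$ to prepare the admixture, which requires $E_d(\rho)>0$. The paper uses continuity only on the open set $\mbD_{>0}$, tests whether the confidence ball itself lies in $\mbD_{>0}$, and disposes of $E_d(\rho)\le\delta$ trivially.

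Second, ``restrict to bounded output dimension by compactness'' is circular, since a Dini-type argument would already need continuity of $f_k$. You rely on it twice:
\begin{itemize}
\item for the continuity modulus of $\rho\mapsto\frac1k I_c(A|B)_{\Gamma(\rho^{\otimes k})}$ with a data-dependent $\Gamma$;
\item for the claim that the hashing-stage error does not depend on the estimate. This does not follow from \cref{pro:hashing bound universal main}, which only gives pointwise convergence for a fixed state on a fixed space, whereas your hashing input $\Gamma_n(\rho^{\otimes k})$ changes with $n$.
\end{itemize}
The paper handles the first with the canonical form $E_d^\le=\sup_{\mM}I_c(A|BE)_{\Lambda_\mM(\cdot)}$. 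There Alice's dimension is untouched, so the Alicki--Fannes--Winter modulus depends only on $d_A^m$. It handles the second by taking the near-optimal POVM with the fewest outcomes, proving that this number is eventually bounded for each fixed $k$, and certifying the dimension-dependent merging error $2^{-l\,c(\delta_0,d)}$ inside the feasibility tests.
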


While we defer the full proof to Appendix~\ref{app:distillable}, we here discuss the main idea of the construction. 
The starting point is to recall that the distillable entanglement can be formally written as~\cite{Devetak2005distillation} 
\bal
 E_d(\rho_{AB}) = \lim_{m\to\infty} \frac{E_d^{\leq}(\rho_{AB}^{\otimes m})}{m}
 \label{eq:distillable entanglement two-way formal main}
\eal
where
\bal
 E_d^{\leq}(\rho_{AB})\coloneqq \sup_{\Lambda\in{\rm LOCC}} I_c(A|B)_{\Lambda(\rho_{AB})}.
\eal
Although this form is practically not very useful because of the optimization over all LOCC operations and the regularized form, it is still useful for our purpose.
In particular, in light of \cref{pro:hashing bound universal main}, it appears plausible to first run the local state tomography to obtain an estimate $\widehat\rho_{AB}$,  apply the LOCC $\widehat\Lambda$ to the actual input $\rho_{AB}^{\otimes m}$ for a sufficiently large $m$, where $\widehat\Lambda$ is the LOCC strategy tailored to $\widehat\rho_{AB}^{\otimes m}$, 
and runs the protocol in \cref{pro:hashing bound universal main} by regarding $\widehat\Lambda(\rho_{AB}^{\otimes m})$ as one copy of the input state.

There are several issues in this idea we need to fix.  
The first problem is the continuity of $\sup_{\Lambda\in{\rm LOCC}}I_c(A|B)_{\Lambda(\rho_{AB})}$.
The underlying idea of using \cref{lem:universal state merging main} is to reduce the analysis of the achievable rate to the coherent information, allowing us to take a sufficiently small ball around the estimate to ensure that all states in the ball has similar values of coherent information.
However, for two close states $\rho_{AB}$ and $\rho_{AB}'$, one cannot exclude 
the possibility that their optimal LOCC operations $\Lambda$ and $\Lambda'$ are not close.
In particular, they could come with different output dimensions, 
making the size of $\Lambda(\rho_{AB})$ and $\Lambda'(\rho_{AB}')$ different.
Since the continuity of the coherent information is given with respect to the dimension, 
we cannot get a universal continuity bound for $\sup_{\Lambda\in{\rm LOCC}}I_c(A|B)_{\Lambda(\rho_{AB})}$
that would allow us to determine the required accuracy of the first state tomography.

To circumvent this, we consider another form of $E_d^\leq$~\cite{hayashi2016mathematical}
\bal
E_d^{\leq}(\rho_{AB})=\sup_{\Lambda_\mM\in{\mT_{\rm LOCC}}} I_c(A|BE)_{\Lambda_{\mM}(\rho_{AB})}
\label{eq:coherent info lower bound measurement}
\eal
where $\mM=\{M_j\}_j$ is a POVM on $A$ and $B$, $E$ is an auxiliary system to store the measurement outcome possessed by Bob, and $\mT_{\rm LOCC}$ is the set of LOCC channels of the form
\bal
 \Lambda_{\mM}(\rho_{AB}) = \sum_j p_j \frac{\sqrt{M_j}\rho_{AB}\sqrt{M_j}}{p_j} \otimes \dm{e_j}_E
 \label{eq:measurement channel}
\eal
where $p_j\coloneqq\Tr(M_j\rho_{AB})$.
Importantly, in this canonical representation,
each Kraus operator $\sqrt{M_j}$ is a square matrix and keeps the dimension of the original bipartite system $AB$ unchanged.
Since the measurement record on $E$ is held by Bob, $\Lambda_\mM$ does not change the dimension of system $A$.
Consequently, since the continuity of coherent information $I_c(A|BE)$ only involves the dimension of system $A$~\cite{Winter2016tight}, the coherent information can be made
robust by instead applying $\Lambda_{\widehat\mM}$ to $\rho_{AB}^{\otimes m}$ for sufficiently large $m$
where $\widehat M$ is the optimal LOCC measurement for the estimated state $\widehat\rho_{AB}$.

We still have another problem. 
How would Alice and Bob find a ``sufficiently large'' $m$ here? 
Although \eqref{eq:distillable entanglement two-way formal main} guarantees that,
for an arbitrary $\delta>0$, there is a sufficiently large $m$ such that $E_d^\leq(\rho_{AB}^{\otimes m})/m> E_d(\rho_{AB})-\delta$, 
this $m$ generally depends on $\rho_{AB}$, i.e., \eqref{eq:distillable entanglement two-way formal main} only 
guarantees the \emph{pointwise} convergence. 
This is problematic because here Alice and Bob do not know $\rho_{AB}$. 
In order for them to decide $m$ that is large enough to realize the target rate, 
\emph{uniform} convergence is needed. 
We resolve this by showing uniform convergence on arbitrary compact subsets of the set of full-rank distillable states.
This reduces the problem to finding such a compact set, the block size, and the final state-merging protocol.

This allows us to accomplish entanglement distillation for full-rank states, because the confidence set is eventually fully contained in such a compact subset. 
However, this may not be the case if $\rho_{AB}$ is not full rank, where the state resides on the boundary of the set of distillable states. 
We address this by perturbing the input state by adding noise to make the state full rank, while making the series of noise strengths vanish asymptotically so that the distillable entanglement of the original input state can be extracted. 
To confirm that this approach works, we also show that the distillable entanglement of perturbed states converges to that of the original input state. 

Our operational procedure taking into account the above considerations goes as 

\begin{enumerate}
    \item Use a predetermined sublinear number of copies for tomography
    \item \label{item:deciding parameter} Decide the noise strength $\xi$ for perturbation, the block size $m$ for state merging, and a finite-outcome LOCC measurement $\Lambda_{\widehat \mM}$ so that it will achieve the desired rate and fidelity
    \item  Perturb each remaining copy of $\rho_{AB}$ with the noise strength $\xi$ to get $\rho_{AB,\xi}$
    \item Apply the chosen LOCC measurement $\Lambda_{\widehat \mM}$ to each $\rho_{AB,\xi}^{\otimes m}$
    \item Apply state merging to the copies of $\Lambda_{\widehat\mM}\qty(\rho_{AB,\xi}^{\otimes m})$
\end{enumerate}

One might worry that, in Step~\ref{item:deciding parameter}, they need to be able to compute the fidelity of the state merging for chosen $\xi$, $m$, and $\Lambda_{\widehat\mM}$ without knowing $\rho_{AB}$. 
We show that they can actually do so because the fidelity guaranteed in \cref{lem:universal state merging main} can be bounded uniformly independent of the state information.

\section{Two-way quantum capacity of unknown channels}

We apply our result to the setting of quantum communication with LOCC assistance~\cite{Bennett-error-correction,pirandola_2017}. 
Let $\mN$ be a point-to-point channel from Alice to Bob. 
If they have access to LOCC for free, the natural quantity
to look at is the maximum size of the maximally entangled state that can be 
distributed by sequential use of $\mN$ and LOCC operation. 
Because of the LOCC assistance, distribution of the maximally entangled state is equivalent to noiseless
quantum communication because of quantum teleportation.

Formally, an $n$-use assisted protocol consists of an initial separable state, arbitrary local memory registers, and a sequence of LOCC maps interleaved with $n$ uses of $\mN$. 
Let $\rho_{AB}^{(n)}$ denote its final bipartite state. 
We say that the rate $C$ is achievable if such a sequence of adaptive protocols satisfies
\bal
  \norm{\rho_{AB}^{(n)}-\Phi_{AB}^{\otimes \floor{Cn}}}_1\xrightarrow[n\to\infty]{} 0.
\eal
Then, the two-way quantum capacity $C_{\leftrightarrow}(\mN)$ is defined as the supremum of achievable rates.

Two-way quantum capacity is generally hard to compute. 
One useful lower bound is given by the distillable entanglement of the Choi state
\bal
C_\leftrightarrow(\mN) \geq E_d(J_\mN)
\eal
where $J_\mN \coloneqq \id\otimes\mN(\Phi)$ is the 
Choi state of $\mN$. 
This is because Alice can always prepare a maximally entangled state locally and send one half of it to Bob
through $\mN$ to obtain $J_\mN$. They can then run entanglement distillation on the copy of the Choi state, distilling the rate of 
$E_d(J_\mN)$ of maximally entangled state. 

Notably, this bound becomes tight for a class of channels known as teleportation covariant channels~\cite{pirandola_2017,Kaur2017}. 
The channel $\mN$ is called teleportation covariant if 
for any generalized Pauli channel $\mP(\cdot)=P\cdot P^\dagger$ where $P$ is a multi-qubit Pauli operator for 
multi-qubit systems and a Heisenberg-Weyl operator for a qudit systems, it satisfies 
\bal
 \mN\circ\mP = \mU\circ\mN
\eal
for some unitary channel $\mU$.
Then, using the teleportation stretching technique~\cite{pirandola_2017},
one can see that 
\bal
 C_\leftrightarrow(\mN) = E_d(J_\mN).
\eal

Let us now suppose that $\mN$ is unknown.
This setting is naturally motivated by considering the situation where the description of the noise channel is unknown.
In particular, the capacity of channels whose description is uncertain has been studied in the context of universal channel coding~\cite{hayashi_2009-2,Hayashi_2022_universal_cq_superposition,Bjelakovic2008quantum,Berta_2017_compound_entanglement_assisted,Boche_entanglement-assisted_2017,Dasgupta2025,matsuura2025universalclassicalquantumchannelresolvability,watanabe2026needuniversalcorrelationdetector} and fault-tolerant quantum communication~\cite{ChristandlMullerHermes2024,ChristandlFawziGoswami2026}. 

In this case, we need to consider a series of LOCC protocols that do not depend on the description of $\mN$.
This motivates us to consider the setting of universal communication under LOCC assistance,
and define the universally achievable rate analogously to \cref{defn:universal distillable entanglement main}.
Namely, their LOCC operations are represented by an instrument, and 
they distill the maximally entangled state whose size is now described by a random variable. 
In the context of communication, this means that the number of qubits that can be reliably sent from Alice to Bob
is probabilistically determined.

Then, the following result is an immediate consequence of \cref{thm:universal distillable entanglement main}.

\begin{cor}
$E_d(J_\mN)$ is an achievable rate of universal quantum capacity
under LOCC assistance. 
Moreover, if it is known that $\mN$ is teleportation covariant,
the two-way capacity $C_{\leftrightarrow}(\mN)$ is universally achievable. 
\end{cor}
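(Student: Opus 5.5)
The corollary follows from \cref{thm:universal distillable entanglement main} by a channel-independent reduction to universal entanglement distillation on Choi states. The plan has three steps.

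\textbf{Step 1: reduce to Choi states.} Fix $n$. Alice locally prepares $n$ copies of the maximally entangled state $\Phi_{A'A}$, with $A'\cong A$ the input system of $\mN$. She sends the $A$ halves through the $n$ uses of the unknown channel. This produces exactly $J_\mN^{\otimes n}$ shared between Alice ($A'$) and Bob (output of $\mN$). The preparation does not depend on $\mN$; it only uses the known input and output dimensions. It is a valid $n$-use assisted protocol, since the initial state is a product of locally prepared states, and channel uses interleaved with trivial LOCC are allowed.

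\textbf{Step 2: distill universally.} Apply the universal LOCC instrument $\{\Lambda_n^x\}_x$ and the rates $\{r_x^{(n)}\}$ supplied by \cref{thm:universal distillable entanglement main} for the bipartite system $A'B$. Because this instrument works for every state in $\mD(\mH_{A'B})$, it works in particular for every Choi state $J_\mN$.

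\textbf{Step 3: read off the guarantees.} The average error $\varepsilon_n(J_\mN)$ vanishes for every $\mN$. For every $\delta>0$, the probability $\Pr_x[r_x^{(n)}<E_d(J_\mN)-\delta]$ also vanishes. The whole protocol (preparation, channel uses, instrument) is independent of $\mN$. It is therefore a universal assisted communication protocol in the sense of the analogue of \cref{defn:universal distillable entanglement main}, and $E_d(J_\mN)$ is universally achievable. Teleportation then converts the distilled entanglement into reliable quantum communication of the corresponding random number of qubits.

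\textbf{Teleportation-covariant case.} If $\mN$ is promised to be teleportation covariant, the stretching identity $C_\leftrightarrow(\mN)=E_d(J_\mN)$ holds for that $\mN$. Substituting it gives universal achievability of $C_\leftrightarrow(\mN)$. The promise is only a restriction of the class of inputs, and the same fixed protocol serves for all of them.

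\textbf{Main obstacle.} The only delicate point is making sure the definition of a universal rate for channels matches the state version. The error should be averaged over the outcome $x$, the rate should be random, and the initial separable state should be allowed to depend only on dimensions. I would state this definition explicitly, with $\rho_{AB}$ replaced by the output of the $n$-use protocol on $\mN$. With that definition fixed, the corollary is a direct substitution.
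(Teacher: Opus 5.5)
Your proposal is correct and follows the paper's own argument. Both prepare $J_\mN^{\otimes n}$ by sending halves of locally prepared maximally entangled states through the unknown channel and run the $\mN$-independent universal distillation protocol of \cref{thm:universal distillable entanglement main}; both then use teleportation stretching, $C_\leftrightarrow(\mN)=E_d(J_\mN)$, to handle the teleportation-covariant case.
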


\section{Conclusions}

We showed that distillable entanglement can be achieved in entanglement distillation under LOCC, even when Alice and Bob do not know the input state at all. 
Our results do not assume any structure about the input state except that it is given in i.i.d. form, which is the standard setting of asymptotic entanglement distillation. 
Nevertheless, it is a natural and interesting question how far one could extend this observation to non-i.i.d. input sources. 
More broadly, clarifying the precise condition under which the state-aware rate can be achieved universally will deepen our understanding of universal resource manipulation. 
Another important direction is to study different figures of merit, such as the second-order asymptotic rate and the optimal rate of error decay, for which the lack of input-state knowledge is likely to be more sensitive than the linear rate of resource yield studied here. 

The technical backbone of our result is universal state merging, which is developed in the context of information processing on compound inputs. 
On the other hand, we did not employ another powerful technique using representation theory and Schur-Weyl duality in particular, which has proven to be useful in universal protocols in various other settings~\cite{Josza1998universal,Hayashi2002quantum,Matsumoto2007universal,matsuura2025universalclassicalquantumchannelresolvability,Watanabe_universal,watanabe2026needuniversalcorrelationdetector}, including the universal entanglement distillation with pure-state inputs~\cite{Matsumoto2007universal,Blume-Kohout2014}.
Devising a protocol that can fully employ the symmetry structure could provide a more efficient protocol. 
We leave a thorough investigation of these problems for future work.

\vspace{.5cm}

\emph{Note added.}---During the final stage of writing this manuscript, we became aware of an independent work by Tirone et al.~\cite{Tirone2026}, where they also showed the universal achievability of distillable entanglement by a different technique.
The main difference lies in the way of ensuring robust distillation after state tomography. 
We achieve this by employing state merging for compound sources, while they do so by running a hashing protocol on the output isotropic states with an unknown parameter.

\emph{Disclosure of AI use.}---The core idea and the initial proof employing state merging with partial information and uniform convergence was due to the authors. ChatGPT 5.6 Sol and 6 Astra were used to find technical mistakes and their revision, which were reviewed by the authors. 
The authors take full responsibility for the contents of the manuscript.

\emph{Acknowledgments.}---R.T. thanks Ludovico Lami for helpful discussions.
This work was supported by JSPS KAKENHI Grant Nos.\ 24K16975, 25K00924, 25KJ0043, 26H02015, 26KJ0965, the Japan Science and Technology Agency (JST) CREST Grant No.\ JPMJCR23I3, NEXUS Grant Number JPMJNX26C2, PRESTO Grant No.\ JPMJPR24FA, RIKEN iTHEMS, RIKEN Pioneering Project `Mathematical foundation of quantum information' (PI Yasuyuki Kawahigashi), and the World-Leading Innovative Graduate Study Program for Advanced Basic Science Course (WINGS-ABC) at the University of Tokyo.

\bibliographystyle{apsrmp4-2}
\bibliography{myref}

\let\addcontentsline\oldaddcontentsline

\clearpage
\newgeometry{hmargin=1.2in,vmargin=0.8in}

\appendix
\widetext

\setcounter{tocdepth}{1}
\tableofcontents

\section{Preliminaries}

\subsection{Entanglement distillation}

Here, we review the standard setting of entanglement distillation from known i.i.d. states. 
Let $\mD(\mH)$ denote the set of all states acting on the Hilbert space $\mH$.  
We consider the setting where two parties, Alice and Bob, possess a bipartite state $\rho_{AB}\in\mD(\mH_{AB})$, where $\mH_{AB}=\mH_A\otimes \mH_B$ with $\mH_A$ and $\mH_B$ being finite-dimensional Hilbert spaces with dimensions $d_A$ and $d_B$. 
The distillable entanglement $E_d(\rho_{AB})$ of the state $\rho_{AB}$ is defined by 
\bal
 E_d(\rho_{AB})\coloneqq \sup\lset r\geq 0 \sbar \exists \{\Lambda_n\}_n\in{\rm LOCC},\ \left\|\Lambda_n(\rho_{AB}^{\otimes n})-\Phi_{AB}^{\otimes \lfloor rn \rfloor}\right\|_1\xrightarrow[n\to \infty]{} 0\rset.
 \label{eq:distillable entanglement def}
\eal

Throughout this manuscript, LOCC denotes the set of local operations and classical communication with finite rounds and finite measurement outcomes. 
As can be seen in the definition, the distillable entanglement is given as a result of the optimization over all possible LOCC protocols \emph{for the given $\rho_{AB}$}. 
This particularly means that the optimal series $\{\Lambda_n\}_n$ of LOCC protocols can generally depend on the description of $\rho_{AB}$.
In other words, this is the ultimate performance under the assumption that the experimenter has complete knowledge about the description of $\rho_{AB}$.

The distillable entanglement is a fundamental quantifier of entanglement that directly reflects the operational significance of the entanglement contained in the given state. 
On the other hand, a closed form for the distillable entanglement is not known, and computing $E_d$ is generally a hard problem.
This also motivates us to obtain a more tractable achievable lower bound that applies to an arbitrary state $\rho_{AB}$. 
The standard lower bound is the \emph{hashing bound}~\cite{Devetak2005distillation,Horodecki2007quantum}, giving 
\bal
 E_d(\rho_{AB})\geq I_c(A|B)_{\rho_{AB}}
 \label{eq:hashing bound}
\eal
where $I_c(A|B)_{\rho_{AB}} \coloneqq H(\rho_B)-H(\rho_{AB})$ is the coherent information, and $H(\rho)\coloneqq-\Tr(\rho\log\rho)$ is the von Neumann entropy of the state $\rho$.
In fact, the hashing bound can be achieved by one-way LOCC---LOCC that involves only classical communication from Alice to Bob. 
One of the specific protocols achieving this bound plays a central role in our result, and it is based on the task known as state merging, which we review in the following.

\subsection{State merging}

State merging is a quantum information processing task where Alice aims to transfer her part of the known given state to Bob while maintaining the correlation with the environment by LOCC and entanglement.
More specifically, they are given the initial state $\rho_{AB}^{\otimes n}$ with purification $\psi_{ABE}^{\otimes n}$.
They also begin with the initial entanglement $\Phi_{d_{\rm ini}}$ and end up with the final entanglement $\Phi_{d_{\rm out}}$.
Then, if there exists a sequence $\{\Lambda_n\}_n$ of LOCC operations such that
\bal
 \norm{\Lambda_n(\psi_{ABE}^{\otimes n}\otimes\Phi_{d_{\rm ini}^{(n)}})-\psi_{\tilde B BE}^{\otimes n}\otimes \Phi_{d_{\rm out}^{(n)}}}_1\xrightarrow[n\to\infty]{} 0,
\eal
we say that $\limsup_{n\to \infty}\frac{\log d_{\rm ini}^{(n)}-\log d_{\rm fin}^{(n)}}{n}$ is an achievable entanglement cost for state merging (Fig.~\ref{fig:state_merging}).

\begin{figure}[htbp]
    \centering
    \includegraphics[width=0.45\linewidth]{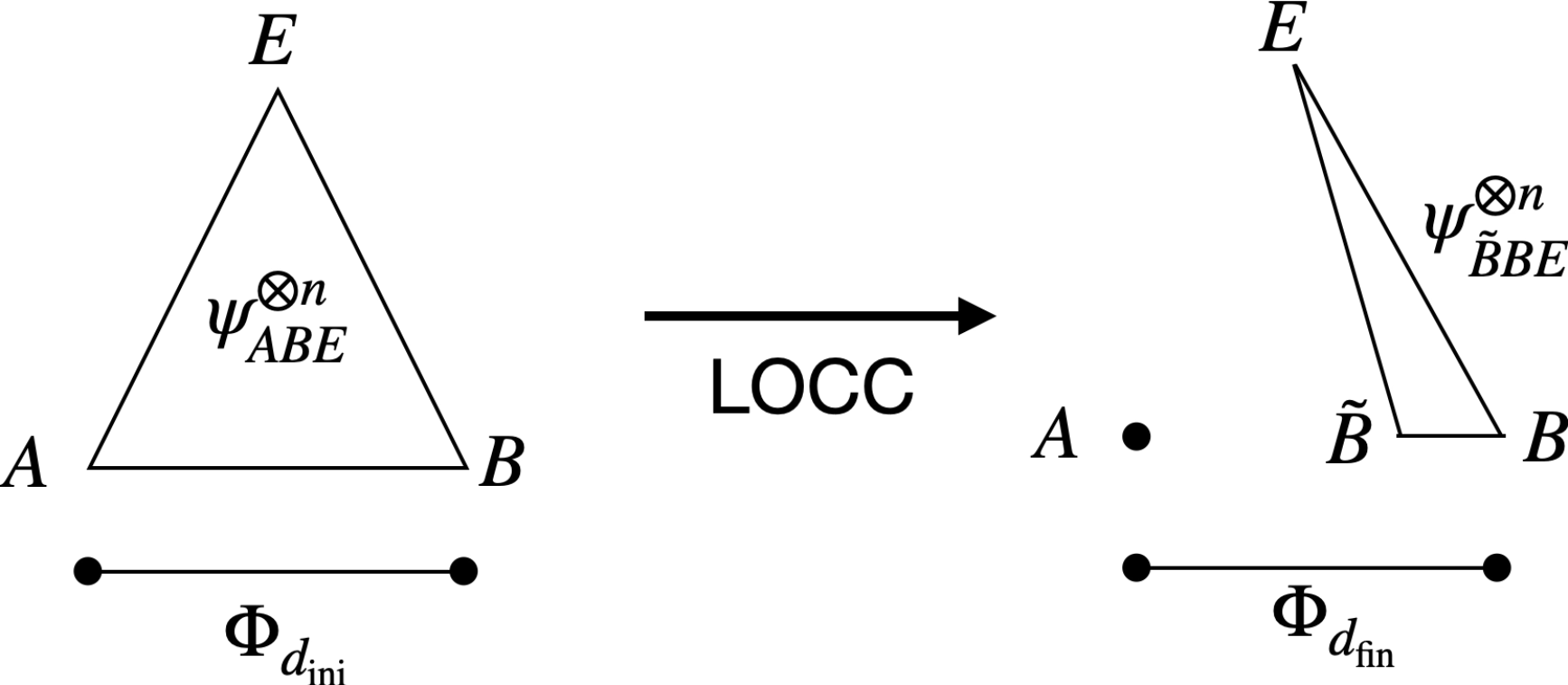}
    \caption{Setting of asymptotic state merging}
    \label{fig:state_merging}
\end{figure}

It is known that the optimal entanglement cost is given by the conditional entropy $H(A|B)_{\rho_{AB}} = H(\rho_{AB})-H(\rho_B)$, which can be achieved by a measurement on Alice's side, followed by Bob's conditional operation~\cite{Horodecki2007quantum}.
Interestingly, this also encompasses the case where the conditional entropy is negative.
In this case, we can rather extract entanglement with the rate of coherent information $I_c(A|B)_{\rho_{AB}}=-H(A|B)_{\rho_{AB}}$.
Moreover, a close look at the construction in Ref.~\cite{Horodecki2007quantum} reveals that when $I_c(A|B)_{\rho_{AB}}>0$, one can start with no initial entanglement, i.e., $d_{\rm ini}=1$.
This shows that, as a ``byproduct'' of the state merging task, one can realize entanglement distillation with input state $\rho_{AB}$ without initial entanglement with the rate $I_c(A|B)_{\rho_{AB}}$ of distilled entanglement.

\subsubsection{State merging with partial information}

Crucially, the result in Ref.~\cite{Horodecki2007quantum} assumes that Alice and Bob know the description of the state beforehand.
The relevant setting of our interest is the case when the description of $\rho_{AB}$ is not known.
An extension to the setting where the full description of $\rho_{AB}$ is not available was considered in Ref.~\cite{Bjelakovic2013universal}. 
They considered the setting where the state $\rho_{AB}$ is taken from the known subset $\mS$ of quantum states.
Namely, the experimenter has access to \emph{partial} information about the state $\rho_{AB}$ in the sense that they know the base set to which the given state belongs.
In this setting, they investigated the required entanglement cost so that there exists a fixed protocol that accomplishes the merging task for every state in $\mS$.
Their result shows that the entanglement cost is characterized by the maximum conditional entropy of the state in the set.

\begin{lem}[State merging with partial information~\cite{Bjelakovic2013universal,Boche2014resource}]\label{lem:universal state merging}
    Let $\mS$ be an arbitrary set of bipartite states on the given system $AB$ with purifying system $E$, and let $d_{AB}$ denote the dimension of the Hilbert space underlying the system $AB$. 
    Then, for every $\delta$ satisfying $0<\delta<\inf_{\rho_{AB}\in\mS}I_c(A|B)_{\rho_{AB}}$, and every integer $n>n_0(\delta,d_{AB})$ where $n_0$ is non-increasing with $\delta$ and non-decreasing with $d_{AB}$, there exists a series $\{\Lambda_n\}_n$ of one-way LOCC protocol from Alice to Bob, a sequence $\{d_{\rm out}^{(n)}\}_n$ of integers, and a positive function $c(\delta,d_{AB})$, which is non-decreasing with $\delta$ and non-increasing with $d_{AB}$, such that 
    \bal
     \norm{\Lambda_n(\psi_{ABE}^{\otimes n})-\psi_{\tilde B BE}^{\otimes n}\otimes \Phi_{d_{\rm out}}}_1\leq 2^{-nc(\delta,d_{AB})}\quad \forall \rho_{AB}\in\mS
     \label{eq:state merging partial info fidelity convergence}
    \eal
    with
    \bal
     \frac{\log d_{\rm out}^{(n)}}{n} \geq \inf_{\rho_{AB}\in\mS} I_c(A|B)_{\rho_{AB}}-\delta.
    \eal

\end{lem}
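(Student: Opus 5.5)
We prove \cref{lem:universal state merging} by a fixed random-coding construction whose performance depends on the state only through smooth-entropy quantities that can be bounded uniformly over $\mS$, instead of a protocol tailored to each state. Following Refs.~\cite{Bjelakovic2013universal,Boche2014resource}, the first step removes the continuum. For $\eta>0$ we take a finite $\eta$-net $\mS_\eta\subset\mS$ in trace distance, with cardinality at most $(3/\eta)^{2d_{AB}^2}$. Since the trace distance between $n$-fold tensor powers grows at most linearly, $\norm{\rho^{\otimes n}-\sigma^{\otimes n}}_1\le n\eta$, we choose $\eta=2^{-n\kappa}$ with $\kappa$ small. Any protocol that performs well on every element of $\mS_\eta$ then performs well on all of $\mS$, at an additive cost $O(n2^{-n\kappa})$ that is still exponentially small. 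Purifications of nearby states can be chosen close by Uhlmann's theorem, so the merging target $\psi_{\tilde BBE}^{\otimes n}$ is also approximated to the same order. The net size is only $2^{O(n\kappa d_{AB}^2)}$, which is subexponential once $\kappa$ is tied to $\delta$.

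The second step handles the finite compound set. Alice applies the same Haar-random unitary to $A^n$ for every state and then performs a complete projective measurement onto $L$ subspaces of dimension $d_{\rm out}$. We use the decoupling (one-shot merging) bound. For each fixed $\rho$ with purification $\psi$, the average over the unitary of the distance of $\sum_j$ (post-measurement $A'E$ state) from $\pi_{A'}\otimes\psi_E^{\otimes n}$ is bounded by $\sqrt{d_{\rm out}\,L\cdots}$. Concretely, it is at most $2^{-\frac12(H_{\min}^\epsilon(A^n|E^n)-\log d_{\rm out})}$ plus smoothing terms. For i.i.d. states, the fully quantum AEP with explicit second-order constants depending only on the dimension gives $H_{\min}^\epsilon(A^n|E^n)\ge n(H(A|E)-\delta/4)$ uniformly in $\rho$ once $n\ge n_0(\delta,d_{AB})$. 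Moreover, $H(A|E)_\psi=-H(A|B)_\rho=I_c(A|B)_\rho$.

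With $\log d_{\rm out}=n(\inf_{\mS}I_c-\delta)$, each state therefore has averaged error at most $2^{-nc'(\delta,d_{AB})}$. We sum over the net using the averaged error of the sum, which is bounded by the sum of the averages, at most $|\mS_\eta|2^{-nc'}$. Choosing $\kappa\ll c'/d_{AB}^2$ keeps this exponentially small, so a single unitary works for all net elements simultaneously. Uhlmann's theorem then provides, for each outcome $j$, Bob's isometry. Here a subtlety appears: the isometry must not depend on the state. We will therefore apply decoupling to the averaged state $\bar\psi=\frac{1}{|\mS_\eta|}\sum_{s}\psi_s\otimes\dm{s}_{E'}$, in which the label is part of the environment. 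Decoupling from $EE'$ jointly gives one Uhlmann isometry for Bob. The entropy $H_{\min}$ of the mixture loses at most $\log|\mS_\eta|$, which is absorbed as above. The resulting one-way LOCC protocol consists of Alice's measurement, communication of $j$, and Bob's isometry.

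Finally, the monotonicity properties follow by defining $c(\delta,d_{AB})$ and $n_0$ as the explicit constants produced above, for instance $c=\delta^2/(K\log^2 d_{AB})$ from the AEP remainder, and replacing them by monotone envelopes if necessary. The main obstacle is the uniform AEP together with the mixture trick: we must ensure that the second-order term in $H_{\min}^\epsilon$ is bounded purely in terms of $d_A$, independently of the possibly unbounded purifying dimension, and that the $\log|\mS_\eta|$ penalty remains below $n\delta/4$.
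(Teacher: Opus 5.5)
The paper does not prove this lemma itself. It imports the result from \cite{Bjelakovic2013universal,Boche2014resource}, converts their fidelity bound to trace distance via Fuchs--van de Graaf, and asserts that inspecting their proof shows the exponent depends only on $(\delta,d_{AB})$. Your reconstruction (exponentially fine net inside $\mS$, one Haar unitary plus a rank-$d_{\rm out}$ measurement on $A^n$, decoupling, a fully quantum AEP with remainder depending only on $d_A$) would make that dependence explicit. However, the step meant to make Bob's decoder state-independent, which is the crux of the lemma, fails as written.

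Your $\bar\psi=\frac1N\sum_s\psi_s^{\otimes n}\otimes\dm{s}_{E'}$ is mixed on $A^nB^nE^nE'$. So $B^n$ does not purify its $A^nE^nE'$ marginal, and Uhlmann's theorem gives no isometry acting on $B^n$ alone. Moreover, this state is block diagonal in $E'$, so its decoupling error is exactly the average over $s$ of the per-state decoupling errors. You had already controlled those, and they cannot force a common decoder. For example, take $\psi_1=\Phi_{AB}\otimes\dm{0}_E$ and $\psi_2=(\mdI_A\otimes U_B)\psi_1(\mdI_A\otimes U_B)^\dagger$. Both are perfectly decoupled when Alice does nothing, hence so is the labelled mixture. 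Yet no $U$-independent isometry of Bob then completes the merging unless $U\propto\mdI$. Consistently, for this cq state $H_{\min}(A|EE')\geq\min_s H_{\min}(A|E)_{\psi_s}$ with no $\log N$ loss, so the penalty you quote belongs to a different object.

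The repair is to decouple the coherent superposition $\ket{\Psi}=N^{-1/2}\sum_s\ket{\psi_s}^{\otimes n}_{ABE}\ket{s}_{E'}$. This state is pure, with $B^n$ purifying $A^nE^nE'$, and its $A^nE^nE'$ marginal carries off-diagonal label blocks.
\begin{itemize}
\item Cauchy--Schwarz gives $\dm{\Psi}\leq\sum_s\dm{\psi_s}^{\otimes n}\otimes\dm{s}$, hence $\Psi_{AEE'}\leq N\bar\psi_{AEE'}$. It follows that $H_{\min}(A^n|E^nE')_\Psi\geq\min_s H_{\min}(A^n|E^n)_{\psi_s^{\otimes n}}-\log N$, and likewise for smooth versions if you smooth componentwise and superpose the smoothed purifications. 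This is the actual source of your $\log N$ loss; your $\bar\psi$ serves only as an auxiliary.
\item Uhlmann applied to $\Psi$ now yields, for each outcome $j$, an isometry on $B^n$ that depends on the net but not on $s$.
\item Because the protocol does not touch $E'$, dephasing $E'$ gives $\frac1N\sum_s\norm{(\Lambda_n\otimes\mathrm{id}_E)(\psi_s^{\otimes n})-\psi_{s,\tilde BBE}^{\otimes n}\otimes\Phi_{d_{\rm out}}}_1\leq\norm{(\Lambda_n\otimes\mathrm{id}_{EE'})(\Psi)-\Phi_{d_{\rm out}}\otimes\Psi_{\tilde BBEE'}}_1$. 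So each net element inherits the joint error times $N$. Your write-up omits this step, and it makes your separate ``one unitary for all net elements'' averaging unnecessary.
\end{itemize}

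Two further points should be made explicit.
\begin{itemize}
\item The smoothing parameter must be $\epsilon=2^{-\gamma n}$ with $\gamma\leq\delta^2/(K\log^2 d_A)$. The AEP remainder is then linear in $n$, which is where $c\propto\delta^2$ comes from.
\item The net has $\log N=O(n\kappa d_{AB}^2)$, which is exponential rather than subexponential. So $\kappa$ must keep $\log N$ a small fraction of both $n\delta$ and the decoupling exponent. The net approximation then costs $O(n2^{-n\kappa/2})$ after matching purifications via Uhlmann.
\end{itemize}
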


\cref{lem:universal state merging} particularly means that entanglement distillation from an unknown state from the set $\mS$ can be accomplished with the rate arbitrarily close to $\min_{\rho_{AB}\in\mS}I_c(A|B)_{\rho_{AB}}$.
We also remark that the corresponding result in \cite{Boche2014resource} is stated in terms of the fidelity, from which our trace-distance bound follows by Fuchs-van de Graaf inequality~\cite{FuchsVanDeGraaf1999}.
Although they express the exponent $c$ as a function of the set $\mS$, going through their proof reveals that it only depends on $\delta$ and dimension $d_{AB}$.

\subsection{Continuity of coherent information}
\label{sec:continuity coherent information}

We remark here the well-known continuity bound of coherent information. 
Namely, for an arbitrary state $\sigma_{AB}$ and $\tilde\sigma_{AB}$ such that $\frac{1}{2}\norm{\sigma_{AB}-\tilde\sigma_{AB}}_1\leq \epsilon$, it satisfies~\cite{Winter2016tight} 
\bal
 \abs{I_c(A|B)_{\sigma_{AB}}-I_c(A|B)_{\tilde\sigma_{AB}}} \leq 2\epsilon\log d_A + (1+\epsilon)h\qty(\frac{\epsilon}{1+\epsilon}),
 \label{eq:continuity coherent information}
\eal
where $h(p)\coloneqq -p\log p -(1-p)\log(1-p)$ is the binary entropy.
Notably, the right-hand side depends on the dimension of the conditioned system only.


\section{Universal entanglement distillation}

\subsection{Setting}

In the following, we formalize the setting where the experimenter does not know the state $\rho_{AB}$. 
The idea is that, just like variable-length source coding~\cite{Hayashi2002quantum,Bennett2006universal}, the protocol outputs a state close to the maximally entangled state, whose size is a random variable.
Then, if the probability of having size larger than a certain number approaches unity while the error vanishes in the asymptotic limit, we can understand this number as an achievable rate.
If this rate can be chosen as a function of the input state, then it can be regarded as the universal entanglement distillation rate.
We formally define this notion as follows.

\begin{defn}[Universally distillable entanglement] \label{defn:universal distillable entanglement}
A function $R:\mD(\mH_{AB})\to\mbR_{\geq 0}$ is an achievable rate of universal entanglement distillation if, for an arbitrary $\delta>0$, there exists a series $\{\Lambda_n\}_{n}$ of an LOCC instrument $\Lambda_n=\{\Lambda_n^x\}_x$ and a series $\{r_x^{(n)}\}_{n,x}$ of real numbers such that the average error vanishes for every given state $\rho_{AB}$: 
\bal
 \varepsilon_{n}(\rho_{AB}) \coloneqq \sum_x p_x \left\|\frac{\Lambda_n^x(\rho_{AB}^{\otimes n})}{p_x}-\Phi_{AB}^{\otimes \floor{nr_x^{(n)}}}\right\|_1\xrightarrow[n\to\infty]{}0,\quad \forall\rho_{AB}\in\mD(\mH_{AB})
 \label{eq:error condition}
\eal
where $p_x \coloneqq \Tr\qty(\Lambda_n^x(\rho_{AB}^{\otimes n}))$, and the probability of realizing the target rate $r_x^{(n)}$ smaller than $R(\rho_{AB})-\delta$ vanishes asymptotically:
\bal
 \Pr_x\qty[r_x^{(n)}<R(\rho_{AB})-\delta]\xrightarrow[n\to\infty]{}0,\quad \forall\rho_{AB}\in\mD(\mH_{AB}).
 \label{eq:overflow probability}
\eal
\end{defn}

A significant question is whether universally distillable entanglement can coincide with the \emph{state-aware} performance, where experimenters can tailor the distillation protocol.
This problem was studied in Ref.~\cite{Matsumoto2007universal} for the case of pure unknown states, where it was shown that the optimal rate of von Neumann entropy of entanglement $S_E(\psi_{AB})=H(\Tr_B\psi_{AB})$ is achievable. 
In this work, we extend this finding to mixed states, i.e., to the setting where no information about the state is provided beforehand.

\subsection{Universal protocol achieving the hashing bound (Proof of \cref{pro:hashing bound universal main})}\label{app:hashing universal}

Here, we first show that the hashing bound \eqref{eq:hashing bound} is universally achievable.
Unlike the original state-aware entanglement distillation protocol~\cite{Devetak2005distillation,Horodecki2007quantum}---which can be accomplished by \emph{one-way} LOCC, we allow ourselves to use two-way LOCC. 
Nevertheless, only one round of classical communication suffices, and the rate of required classical communication from Bob to Alice vanishes in the limit of $n\to\infty$. 

\begin{thm}\label{thm:universal entanglement distillation hashing}
  The hashing bound $I_c(A|B)_{\rho_{AB}}$ is an achievable universally distillable entanglement.
  Moreover, it can be realized by almost one-way communication, where $O(\log d_A + \log d_B + \log n)$ bits of backward classical communication are sufficient for input $\rho_{AB}^{\otimes n}$.
\end{thm}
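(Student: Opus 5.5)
Our plan is a learn-then-merge protocol. Fix $\delta>0$. Split the $n$ input copies into $k_n$ estimation copies and $n-k_n$ distillation copies, with $k_n\to\infty$ and $k_n/n\to 0$ (e.g.\ $k_n=\lfloor\sqrt n\rfloor$). On each estimation copy, Alice and Bob perform fixed local informationally complete POVMs. Alice sends her outcomes to Bob, who forms the linear-inversion (or projected) estimate $\widehat\rho_{AB}$.

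By standard concentration (Hoeffding on the finitely many outcome frequencies), there is a sequence $\epsilon_n\to 0$, independent of the state, with
\[
\Pr\bigl[\tfrac12\|\widehat\rho_{AB}-\rho_{AB}\|_1>\epsilon_n\bigr]\le \eta_n\to 0
\]
uniformly in $\rho_{AB}$. Bob computes $\widehat I\coloneqq I_c(A|B)_{\widehat\rho_{AB}}$ and rounds it to a grid of spacing $1/n$ in $[-\log d_A,\log d_A]$. He sends the index back to Alice using $O(\log n+\log d_A)$ bits; sending the rounding data together with any needed flag fits within $O(\log d_A+\log d_B+\log n)$ bits.

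Let $g(\epsilon)\coloneqq 2\epsilon\log d_A+(1+\epsilon)h(\epsilon/(1+\epsilon))$ be the continuity bound \eqref{eq:continuity coherent information}, and note $g(\epsilon_n)\to 0$. Given the announced value $\widehat I$, both parties define the candidate set
\[
\mS_{\widehat I}\coloneqq\{\sigma_{AB}: |I_c(A|B)_\sigma-\widehat I|\le g(\epsilon_n)+1/n\},
\]
so that $\inf_{\sigma\in\mS_{\widehat I}}I_c(A|B)_\sigma\ge \widehat I-g(\epsilon_n)-1/n$. Two cases follow.
\begin{itemize}
\item If $\widehat I-g(\epsilon_n)-1/n\le\delta$, they declare rate $r_x=0$. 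The output is trivially exact, since $\Phi^{\otimes 0}$ is a trivial state.
\item Otherwise, they run the protocol of \cref{lem:universal state merging} for the set $\mS_{\widehat I}$ with slack $\delta/2$ on the remaining $n-k_n$ copies. This requires $n-k_n>n_0(\delta/2,d_{AB})$, which holds for large $n$ since $n_0$ depends only on $\delta$ and dimension. They trace out $\tilde B B$ and declare $r_x=\log d_{\rm out}/n$, which satisfies $r_x\ge (1-k_n/n)(\widehat I-g(\epsilon_n)-1/n-\delta/2)$.
\end{itemize}

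For the error, consider the good event $\frac12\|\widehat\rho-\rho\|_1\le\epsilon_n$, which has probability at least $1-\eta_n$. On it, $\rho_{AB}\in\mS_{\widehat I}$. Tracing out $\tilde BBE$ in \eqref{eq:state merging partial info fidelity convergence} gives conditional trace distance at most $2^{-(n-k_n)c(\delta/2,d_{AB})}$; this bound is uniform because $c$ depends only on $\delta$ and $d_{AB}$. On the bad event the error is at most $2$. Hence $\varepsilon_n(\rho_{AB})\le 2^{-(n-k_n)c}+2\eta_n\to0$.

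One subtlety is that the conditional post-estimation state is still $\rho_{AB}^{\otimes(n-k_n)}$, because the estimation copies are independent of the distillation copies. So conditioning on the outcome $x=(\text{tomography data},\text{merging outcome})$ is legitimate, and the $p_x$-weighted sum in \eqref{eq:error condition} factorizes correctly.

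For the rate, on the good event $\widehat I\ge I_c(A|B)_\rho-g(\epsilon_n)$. Therefore
\[
r_x\ge (1-o(1))\bigl(I_c(A|B)_\rho-2g(\epsilon_n)-1/n-\delta/2\bigr),
\]
which exceeds $I_c(A|B)_\rho-\delta$ for large $n$ whenever $I_c(A|B)_\rho>0$. If $I_c(A|B)_\rho\le\delta$, the requirement \eqref{eq:overflow probability} is vacuous, since $r_x\ge0$. Thus $\Pr[r_x<I_c-\delta]\le\eta_n\to0$. If one wants $R(\rho_{AB})=\max\{I_c,0\}$, this is consistent with the codomain $\mbR_{\ge0}$.

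The main obstacle is ensuring that every constant in \cref{lem:universal state merging} is uniform. The protocol applies the lemma to a data-dependent set $\mS_{\widehat I}$, so the threshold $n_0$ and exponent $c$ must not depend on the set. The lemma as stated (dependence only on $\delta$ and $d_{AB}$) guarantees this, and the proof should check it explicitly. A second point to verify is that the merging output is exactly a maximally entangled state of dimension $d_{\rm out}$, which fits the form $\Phi^{\otimes\lfloor nr_x\rfloor}$ after a harmless rounding of $d_{\rm out}$ down to a power of two.
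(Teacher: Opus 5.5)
Your proposal is correct and follows essentially the same route as the paper. Both use sublinear local tomography, have Bob send back an estimate of $I_c(A|B)$ with $O(\log n)$ bits, apply \cref{lem:universal state merging} to the data-dependent set of states whose coherent information is consistent with that estimate (relying on $n_0$ and $c$ depending only on $\delta$ and $d_{AB}$), and split the error and rate analysis into good and bad tomography events.

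Rounding the announced $\widehat I$ to a grid and enlarging the candidate set by $1/n$ is a careful touch the paper leaves implicit. One small fix: take $x$ to be the tomography data alone rather than including the merging outcome, because the lemma only bounds the merging output averaged over Alice's measurement outcomes.
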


\begin{proof}

We prove Theorem~\ref{thm:universal entanglement distillation hashing} by lifting the protocol in Lemma~\ref{lem:universal state merging} to the universal entanglement distillator that distills the rate of coherent information of the given unknown state.  
For each positive integer $n$, our protocol for $n$ inputs proceeds as follows.
In the following, we set 
\bal
\epsilon_n=\eta_n=n^{-1/3}.
\eal

\stepparagraph{Estimate the coherent information.}\label{para:hashing tomography}

Alice and Bob first estimate the coherent information of the input state by LOCC.
This is the only step where backward communication from Bob to Alice is required.

To obtain the estimate, they run state tomography by LOCC using  $\alpha_n=O\left(\frac{d_{AB}^4 \log(1/\eta_n)}{\epsilon_n^2}\right)=o(n)$ input copies.
It is known that this number of copies suffices to ensure the estimation with trace-distance accuracy $\epsilon_n$ with probability at least $1-\eta_n$ by local two-outcome measurements on each subsystem, followed by classical communication~\cite{Lowe2025lower}.
Here, Alice first sends her measurement outcomes to Bob, from which Bob obtains the classical estimation $\widehat\rho_{AB,x}$ of the input state, where $x$ denotes the measurement outcomes that Alice and Bob obtained.  

Using the estimation of the input state, Bob computes its coherent information $\widehat I_x$. 
Bob then sends the value of $\widehat I_x$ to Alice by backward communication. 
For the estimate $\widehat\rho_{AB,x}$ such that $\frac{1}{2}\|\rho_{AB}-\widehat\rho_{AB,x}\|_1\leq \epsilon_n$, the continuity of coherent information in \eqref{eq:continuity coherent information} guarantees that 
\bal
 \abs{I_c(A|B)_{\rho_{AB}}-\widehat I_x}\leq 2\epsilon_n\log d_A + (1+\epsilon_n)h\qty(\frac{\epsilon_n}{1+\epsilon_n})=:\epsilon'_n.
 \label{eq:estimated coherent information continuity}
\eal
Also, since $I_c(A|B)_{\rho_{AB}}<d_{AB}$, it suffices for Bob to transmit $O(\log d_{AB}+\log(1/\epsilon'_n))=O(\log d_A + \log d_B + \log n)$ classical bits to Alice to share the $\epsilon'_n$-approximation of $I_c(A|B)_{\rho_{AB}}$.

\stepparagraph{State merging to distill entanglement.}\label{para:hashing state merging}

Alice and Bob fix an arbitrary positive number $\delta>0$ corresponding to the target deviation in \eqref{eq:overflow probability}, and choose the state-merging parameter to be $\delta/2$.
If $\widehat I_x-\epsilon'_n-\frac{\delta}{2}\leq 0$, they output a product state, yielding zero distilled entanglement. 

If $\widehat I_x-\epsilon'_n-\frac{\delta}{2}>0$, they consider the set of states whose coherent information is $\epsilon'_n$-close to the estimated value, defined by
\bal
 \mS^{\epsilon'_n}_x \coloneqq \lset \sigma_{AB}\sbar \left|I_c(A|B)_{\sigma_{AB}}-\widehat I_x\right|\leq\epsilon'_n\rset.
\eal
They then run the state merging protocol in Lemma~\ref{lem:universal state merging} with the choice of $\mS = \mS^{\epsilon'_n}$ to the remaining $n-\alpha_n$ copies of input states.

\vspace{.5cm}

We now check that the above protocol satisfies \eqref{eq:error condition} and \eqref{eq:overflow probability} in \cref{defn:universal distillable entanglement} with $I_c(A|B)_{\rho_{AB}}$ being a universally achievable rate.

Here, we take $x$ to be the tomography data that estimates $\widehat\rho_{AB,x}$, and $\Lambda_n^x$ is a subchannel that obtains the tomography data $x$ from the first $\alpha_n$ copies of $\rho_{AB}$ followed by the state merging protocol applied to the remaining $n-\alpha_n$ copies of $\rho_{AB}$.
Let $\delta>0$ be the positive constant fixed in \ref{para:hashing state merging} and take
\bal
r_x^{(n)}=\frac{n-\alpha_n}{n}\max\qty{\widehat I_x - \epsilon_n' -\frac{\delta}{2},0}.
\label{eq:target rate hashing}
\eal 
Let $\tilde X$ be the set of $x$'s that correspond to $\widehat\rho_{AB,x}$ such that $\frac{1}{2}\norm{\widehat\rho_{AB,x}-\rho_{AB}}_1\leq \epsilon_n$. 
Then, the average error in \eqref{eq:error condition} can be estimated as
\bal
\varepsilon_n(\rho_{AB})&= \sum_{x\in\tilde X} p_x\norm{\frac{\Lambda_n^x(\rho_{AB}^{\otimes n})}{p_x}-\Phi_{AB}^{\otimes \floor{nr_x^{(n)}}}}_1 + \sum_{x\not\in\tilde X} p_x\norm{\frac{\Lambda_n^x(\rho_{AB}^{\otimes n})}{p_x}-\Phi_{AB}^{\otimes \floor{nr_x^{(n)}}}}_1\\
&\leq (1-\eta_n)\max_{x\in\tilde X}\norm{\frac{\Lambda_n^x(\rho_{AB}^{\otimes n})}{p_x}-\Phi_{AB}^{\otimes \floor{nr_x^{(n)}}}}_1 + 2\eta_n.
\label{eq:error hashing}
\eal 
We then know from \cref{lem:universal state merging} that the first term in \eqref{eq:error hashing} vanishes in the $n\to\infty$ limit. 
Since $\eta_n\xrightarrow[n\to\infty]{}0$, the second term also vanishes, showing \eqref{eq:error condition}.

We finally see that the choice of \eqref{eq:target rate hashing} allows us to take $R=I_c(A|B)_{\rho_{AB}}$ in \eqref{eq:overflow probability}.
Recalling that $\widehat\rho_{AB,x}$ is an estimate of $\rho_{AB}$ given by state tomography in \ref{para:hashing tomography}, it is ensured that $\rho_{AB}\in\mS_x^{\epsilon'_n}$ with probability at least $1-\eta_n$, which approaches 1 as $n\to\infty$.
In this case, it holds that 
\bal
r_x^{(n)}- I_c(A|B)_{\rho_{AB}} &\geq r_x^{(n)}- \widehat I_x - \epsilon_n'\\
&\geq-\frac{\alpha_n}{n}\widehat I_x -\qty(2+\frac{\alpha_n}{n})\epsilon_n'-\qty(1-\frac{\alpha_n}{n})\frac{\delta}{2}
\eal
where the first line is because of \eqref{eq:estimated coherent information continuity}, and the second line follows from \eqref{eq:target rate hashing}.
Recalling $\alpha_n=o(n)$ and $\epsilon_n'=o(n)$, the right-hand side converges to $-\frac{\delta}{2}$ in $n\to\infty$ limit. 
Taking a sufficiently small $\delta$ confirms that \eqref{eq:overflow probability} holds.
\end{proof}

\subsection{Universal protocol achieving distillable entanglement (Proof of \cref{thm:universal distillable entanglement main})}\label{app:distillable}

Here, we prove \cref{thm:universal distillable entanglement main} in the main text, stating that distillable entanglement is universally achievable.
\cref{thm:universal distillable entanglement main} is a direct consequence of \cref{pro:analysis of protocol} presented in Sec.~\ref{sec:analysis}.
Toward this, we first provide the main idea underlying the proof and several technical results on the uniform convergence, followed by an explicit description of our protocol and its analysis.

\subsubsection{Overall idea}\label{sec:idea}

Before digging into the detailed proof, let us first provide an overall idea behind our protocol. 
We first recall that the distillable entanglement $E_d(\rho_{AB})$ under two-way LOCC can formally be characterized as~\cite{Devetak2005distillation}
\bal
 E_d(\rho_{AB}) = \lim_{m\to\infty} \frac{E_d^{\leq}(\rho_{AB}^{\otimes m})}{m}
 \label{eq:distillable entanglement two-way formal}
\eal
where
\bal
 E_d^{\leq}(\rho_{AB})\coloneqq \sup_{\Lambda\in{\rm LOCC}} I_c(A|B)_{\Lambda(\rho_{AB})}.
 \label{eq:one-shot distillable entanglement coherent information supremum LOCC}
\eal
We note that this supremum may not be attainable.
Nevertheless, for any $\delta>0$,  there exists an LOCC that achieves $E_d^\leq(\rho_{AB})-\delta$.

This form appears useful for our purpose, suggesting the potential of applying the idea used in \cref{thm:universal entanglement distillation hashing} to the block $\Lambda_m(\rho_{AB}^{\otimes m})$, where $\Lambda_m$ is an optimal LOCC for $m$ copy state $\rho_{AB}^{\otimes m}$.
Namely, we first run local state tomography to learn $\rho_{AB}$, prepare many copies of $\Lambda_m(\rho_{AB}^{\otimes m})$ for some sufficiently large $m$, and apply state merging with partial information in \cref{lem:universal state merging} to the set of states close to $\Lambda_m(\rho_{AB}^{\otimes m})$ whose size is set by the accuracy of tomography.
We would then use the continuity of coherent information to realize the rate $E_d(\rho_{AB})-\delta$ for an arbitrary $\delta>0$.

However, there is a subtlety in the last step, when using the continuity of coherent information.
The continuity bound of coherent information involves the dimension of the output of $\Lambda_m$, but we do not have a universal upper bound on the output dimension of the optimal LOCC. 
Therefore, there is always a chance that, after state tomography using a certain number of copies and obtaining the state estimate and the corresponding LOCC, they get to know that the accuracy of tomography is not enough to realize the distillation rate within the target deviation. 
They could continue further state tomography, but then it gives a different state estimate and LOCC, which could come with an even larger output dimension, making the convergence of this process unclear.

We circumvent this issue by employing the canonical measurement representation of $E_d^\leq$ derived in Ref.~\cite[Eqs.~(8.125) and (8.126)]{hayashi2016mathematical}
\bal
E_d^{\leq}(\rho_{AB})=\sup_{\Lambda_\mM\in \mT_{\rm LOCC}} I_c(A|BE)_{\Lambda_{\mM}(\rho_{AB})}
\label{eq:another form LOCC measurement}
\eal
where $\mM=\{M_j\}_j$ is a POVM on the bipartite system $AB$ associated with this canonical measurement representation, and $\mT_{\rm LOCC}$ is the set of LOCC channels of the form
\bal
 \Lambda_{\mM}(\rho_{AB}) = \sum_j p_j \frac{\sqrt{M_j}\rho_{AB}\sqrt{M_j}}{p_j} \otimes \dm{e_j}_E,
\eal
which is the channel that measures $A$ and $B$ with the POVM $\mM$ and stores the measurement outcomes in another system $E$ held by Bob.
The point of this rewriting is to keep the dimension of Alice's subsystem invariant---as POVM elements $\{M_j\}$ are square matrices, the input and output dimensions on Alice's side coincide. 
We do not have control over the dimension of Bob's output $BE$, but it is not a problem here because the continuity bound of the coherent information only involves the dimension of the \emph{conditioned} system~\cite{Winter2016tight}. 
We remark on this in Sec.~\ref{sec:continuity coherent information}.

This suggests the following preliminary protocol. 
We first run the local state tomography to get a good estimate $\widehat\rho_{AB}$ of the input state $\rho_{AB}$ using a sublinear number $\alpha$ of copies. 
Then, we divide the remaining $n-\alpha$ copies into blocks of $m$ copies and apply the optimal LOCC $\Lambda_{\widehat\mM}$ tailored to the estimate $\widehat\rho_{AB}$ to each block.
The state in each block has the form $\Lambda_{\widehat\mM}(\sigma_{AB}^{\otimes m})$ for some $\sigma_{AB}\in\mB(\widehat\rho_{AB})$ where $\mB(\widehat\rho_{AB})$ is a neighborhood around the estimate $\widehat\rho_{AB}$, and the total state has the i.i.d. form $\qty[\Lambda_{\widehat\mM}(\sigma_{AB}^{\otimes m})]^{\otimes (n-\alpha)/m}$.
Therefore, by applying state merging in \cref{lem:universal state merging} to the set 
\bal
 \lset \Lambda_{\widehat\mM}(\sigma_{AB}^{\otimes m})\sbar \sigma_{AB}\in\mB(\widehat\rho_{AB})\rset,
\eal
this achieves the rate 
\bal
\min_{\sigma_{AB}\in\mB(\widehat\rho_{AB})}\frac{I_c(A|BE)_{\Lambda_{\widehat \mM}(\sigma_{AB}^{\otimes m})}}{m}.
\eal
Therefore, noting 
 \eqref{eq:distillable entanglement two-way formal}, this strategy appears to achieve the rate $E_d(\rho_{AB})-\delta$ for an arbitrary $\delta>0$ by taking sufficiently large $m$ and $\alpha$ in the limit $n\to\infty$.  

However, we still have another subtlety here. 
Recall that we do not know the state $\rho_{AB}$, and the ``large enough'' $m$ can generally depend on $\rho_{AB}$. 
In order for us to choose a sufficiently large $m$ such that $E_d^\leq(\rho_{AB}^{\otimes m})/m\geq E_d(\rho_{AB})-\delta$ for \emph{every} state $\rho_{AB}$, the series $\qty{E_d^\leq(\rho_{AB}^{\otimes m})/m}_m$ of functions needs to converge to $E_d(\rho_{AB})$ \emph{uniformly}, meaning that for an arbitrary $\delta>0$, there is a sufficiently large $m$ such that $E_d^\leq (\rho_{AB}^{\otimes m})/m\geq E_d(\rho_{AB})-\delta$ for every state $\rho_{AB}$.

We resolve this problem by showing that the series $\qty{E_d^\leq(\rho_{AB}^{\otimes m})/m}_m$ converges uniformly to $E_d(\rho_{AB})$ on every compact subset of the set of full-rank distillable states; see Sec.~\ref{sec:uniform convergence}.  
Once a confidence set obtained by the initial tomography lies in such a compact region, the theorem gives a block size $m$ that works uniformly over the entire confidence set.

This allows us to accomplish entanglement distillation for full-rank states, because the confidence set is eventually fully contained in such a compact subset. 
However, this may not be the case if $\rho_{AB}$ is not full rank, where the state resides on the boundary of the set of distillable states. 
We address this by perturbing the input state by adding depolarizing noise to make the state full rank, while making the series of noise strengths vanish asymptotically so that the distillable entanglement of the original input state can be extracted. 
To confirm that this approach works, we show in Sec.~\ref{sec:robustness} that the distillable entanglement of perturbed states converges to that of the original input state. 

Our operational procedure taking into account the above considerations goes as 

\begin{enumerate}
    \item Use a predetermined sublinear number of copies for tomography
    \item \label{item:deciding parameter} Decide the noise strength $\xi$ for perturbation, the block size $m$ for state merging, and a finite-outcome LOCC measurement $\Lambda_{\widehat \mM}$ so that it will achieve the desired rate and fidelity
    \item  Perturb each remaining copy of $\rho_{AB}$ with the noise strength $\xi$ to get $\rho_{AB,\xi}$
    \item Apply the chosen LOCC measurement $\Lambda_{\widehat \mM}$ to each $\rho_{AB,\xi}^{\otimes m}$
    \item Apply state merging to the copies of $\Lambda_{\widehat\mM}\qty(\rho_{AB,\xi}^{\otimes m})$
\end{enumerate}

One might worry that, in Step~\ref{item:deciding parameter}, they need to be able to compute the fidelity of the state merging for chosen $\xi$, $m$, and $\Lambda_{\widehat\mM}$ without knowing $\rho_{AB}$. 
We see that they can actually do so because the fidelity guaranteed in \cref{lem:universal state merging} can be bounded uniformly independent of the state information. 

In Sec.~\ref{sec:protocol}, we provide each step in the procedure in detail, and in Sec.~\ref{sec:analysis}, we show that one can indeed find the required parameters in Step~\ref{item:deciding parameter}, and that the whole procedure serves as the universal entanglement distillation protocol satisfying the requirements in \cref{defn:universal distillable entanglement}.

\subsubsection{Uniform convergence of $E_d^\leq$}\label{sec:uniform convergence}

Here, we show a uniform convergence property of $E_d^\leq$ (\cref{cor:uniform continuity of distillable entanglement}), which plays a key role in the universal distillation protocol. 
To this end, we first remark the continuity of $E_d$ on the set of full-rank distillable states.

\begin{lem}\label{lem:pointwise continuity of distillable entanglement}
Let $\mbD_{>0}\coloneqq \lset \rho_{AB}\sbar E_d(\rho_{AB})> 0,\ \lambda_{\min}(\rho_{AB})> 0\rset$, where $\lambda_{\min}(\rho_{AB})$ is the minimum eigenvalue of $\rho_{AB}$, be the set of full-rank distillable states.
Then, $E_d(\rho_{AB})=\lim_{m\to\infty}\frac{E_d^\leq(\rho_{AB}^{\otimes m})}{m}$ is continuous on $\mbD_{>0}$. 
\end{lem}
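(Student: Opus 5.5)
The plan is to establish lower and upper semicontinuity of $E_d$ on $\mbD_{>0}$ separately.

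\emph{Lower semicontinuity via mixing.} Let $\rho_{AB}\in\mbD_{>0}$ and let $\sigma_{AB}$ be close to it. Since $\lambda_{\min}(\rho_{AB})>0$, for $\sigma_{AB}$ with $\norm{\sigma_{AB}-\rho_{AB}}_\infty\le \epsilon$ we can write
\[
\sigma_{AB}=(1-t)\rho_{AB}+t\,\tau_{AB},\qquad \rho_{AB}=(1-t)\sigma_{AB}+t\,\tau'_{AB}
\]
for some states $\tau_{AB},\tau'_{AB}$ and $t=O(\epsilon/\lambda_{\min}(\rho_{AB}))$. Concretely, $\tau_{AB}=(\sigma_{AB}-(1-t)\rho_{AB})/t$ is positive as soon as $t\lambda_{\min}(\rho_{AB})\ge \epsilon$, and the symmetric statement holds for $\sigma_{AB}$, which is full rank by Weyl's inequality.

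The central estimate is then a mixing bound of the form
\[
E_d\bigl((1-t)\rho_{AB}+t\tau_{AB}\bigr)\ge E_d(\rho_{AB})-f(t),\qquad f(t)\to0 \text{ as } t\to 0.
\]
This is what makes full rank essential: it lets us express nearby states as convex mixtures with small weight.

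\emph{Deriving the mixing bound.} My first attempt would go through the regularized formula \eqref{eq:another form LOCC measurement}. Take a block length $m$ and a measurement LOCC $\Lambda_\mM$ nearly optimal for $\rho_{AB}^{\otimes m}$. Expanding $\sigma^{\otimes m}=\sum_k\binom{m}{k}(1-t)^{m-k}t^k(\cdots)$ does not give closeness for fixed $t$ and growing $m$, so this route needs care.

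The cleaner route is a flagging argument. Write $\sigma^{\otimes n}$ as a mixture over which copies are ``good'' (equal to $\rho_{AB}$) and which are replaced by $\tau_{AB}$. The flag is not accessible to Alice and Bob, so I would not use it directly. Instead I would use the fact that an $E_d$ protocol for $\rho_{AB}$ can be turned into one for $\sigma_{AB}$ via quantum error correction. Specifically, Alice and Bob first distill from $\sigma_{AB}$ at rate $E_d(\rho_{AB})-\delta$ of states that are each $t$-erroneous, then correct these errors with a hashing-type code. The resulting rate loss is bounded by something like $2h(t)+ct\log d$.

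A simpler alternative is to invoke known continuity results. Asymptotic continuity of $E_d$ fails in general, but for full-rank states the convex-mixture trick should suffice.

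\emph{Upper semicontinuity.} By symmetry, applying the same mixing bound with the roles of $\rho_{AB}$ and $\sigma_{AB}$ exchanged gives
\[
E_d(\rho_{AB})\ge E_d(\sigma_{AB})-f(t).
\]
Combining the two directions gives $|E_d(\rho_{AB})-E_d(\sigma_{AB})|\le f(t)\to 0$, which is the claimed continuity.

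\emph{Main obstacle.} The hardest step is the mixing bound. I would first try to prove it as follows. Distillation from $\sigma^{\otimes n}$ equals distillation from $\rho^{\otimes n}$ with about $tn$ unknown positions replaced by $\tau_{AB}$. Alice and Bob apply the $\rho$-optimal LOCC protocol on blocks of size $m$, treating blocks containing a $\tau_{AB}$ as noise; a fraction of roughly $1-(1-t)^m$ of the blocks are corrupted. Choosing $m$ fixed and letting $t\to0$ makes this fraction small. Applying \cref{thm:universal entanglement distillation hashing} (the hashing bound) to the block outputs, the continuity bound \eqref{eq:continuity coherent information} on the conditioning-free system $A$ of each block output then bounds the loss by
\[
\frac{1}{m}\Bigl[2mt\log d_A+h(mt)+\cdots\Bigr]+\delta_m,
\]
where $\delta_m$ is the gap between $E_d^\leq(\rho_{AB}^{\otimes m})/m$ and $E_d(\rho_{AB})$. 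This is small when first $m$ is chosen large and then $t$ is taken small.
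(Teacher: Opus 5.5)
\textbf{The gap is the ``by symmetry'' step, which would give upper semicontinuity.} Your lower-semicontinuity half is fine, and it does not even need full rank or the mixture structure: it is the paper's \cref{lem:lower semicontinous}, which holds at every state. The problem is that your mixing bound is not a function $f(t)$ alone. What you actually derive is $f_\rho(t)=\delta_m(\rho)+O(mt\log d_A)+\dots$, where $\delta_m(\rho)=E_d(\rho)-E_d^\leq(\rho^{\otimes m})/m$, and the block length $m$ has to be chosen depending on the base state before $t$ is made small. When you swap roles, you need an $m$ with $\delta_m(\sigma)$ small, where $\sigma$ ranges over a shrinking neighborhood of $\rho$. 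Nothing prevents the required $m(\sigma)$ from diverging as $\sigma\to\rho$, and then $m(\sigma)t$ need not vanish. To close the loop you would need $\delta_m(\sigma)\to 0$ uniformly for $\sigma$ near $\rho$. That is exactly the uniform convergence the paper deduces \emph{from} this lemma in \cref{cor:uniform continuity of distillable entanglement}, so using it here would be circular. A telltale sign is that the hypothesis $E_d(\rho)>0$ is never used. Your ``error-correction'' alternative does not help either. An asymptotic protocol tailored to $\rho$ gives no control over its output on $\sigma^{\otimes n}$, since $\norm{\sigma^{\otimes n}-\rho^{\otimes n}}_1\to 2$, so the outputs are not ``each $t$-erroneous''.

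\textbf{The missing idea.} Use the convex decomposition in the opposite direction, as an LOCC simulation paid for with entanglement distilled from $\rho$. Write $\sigma=(1-t)\rho+t\tau$ with $t=\norm{\sigma-\rho}_\infty/\lambda_{\min}(\rho)$. Alice and Bob can then turn $\rho^{\otimes n}$ into approximately $\sigma^{\otimes n}$ as follows. Using shared randomness, they flag about $tn$ positions. On each flagged position, Alice discards the copy of $\rho$, prepares $\tau$ locally, and teleports its $B$ part to Bob. This consumes about $tn\log d_B$ ebits. Because $E_d(\rho)>0$, these ebits can be distilled from about $tn\log d_B/E_d(\rho)$ extra copies of $\rho$. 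Distilling the resulting $\sigma^{\otimes n}$ gives $E_d(\rho)\geq E_d(\sigma)/(1+t\log d_B/E_d(\rho))$, that is, $E_d(\sigma)\leq E_d(\rho)+t\log d_B$. This is upper semicontinuity at $\rho$, and together with lower semicontinuity it yields continuity on $\mbD_{>0}$. Full rank is used for the decomposition, and $E_d(\rho)>0$ is used to pay back the teleportation cost.

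For comparison, the paper does not argue this directly. It cites Vidal's result that $E_d$ is continuous on open subsets of the distillable states, and observes that $\mbD_{>0}$ is open.
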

\begin{proof}
In Ref.~\cite{vidal2002continuityasymptoticmeasuresentanglement}, it is shown that the distillable entanglement is continuous in any open subset of the set of distillable states.
On the other hand, since the set of undistillable states is closed~\cite{horodecki2001boundentanglementcontinuousvariables} and the set of full-rank states is in the interior of the set of quantum states, the set $\mbD_{>0}$ is an open subset, which concludes the proof.
\end{proof}

The following result shows that pointwise convergence can be lifted to uniform convergence under several conditions.

\begin{lem}\label{lem:lifting pointwise to uniform continuity}
Let $\mD(\mH)$ denote the set of states on $\mH$.
Let $\mS\subset \mD(\mH)$ be a compact set and $f_m$ be a non-negative function over $\mD(\mH^{\otimes m})$.
Suppose that $f_m$ satisfies the following conditions.
\begin{enumerate}
    \item $f_m$ is continuous for every positive integer $m$ 
    \item Superadditivity: $f_{m+n}(\rho^{\otimes (m+n)})\geq f_m(\rho^{\otimes m}) + f_n(\rho^{\otimes n})$
    \item The limit $f^\infty(\rho)=\lim_{m\to\infty}\frac{f_m(\rho^{\otimes m})}{m}$ (which is guaranteed to exist because of Fekete's lemma) is continuous on $\mS$
\end{enumerate}
Then, the series $\qty{\frac{f_m(\rho^{\otimes m})}{m}}_m$ converges to $f^\infty(\rho)$ uniformly.
\end{lem}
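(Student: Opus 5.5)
This is a Dini-type argument, made to work with superadditivity in place of monotonicity. Write $g_m(\rho)\coloneqq f_m(\rho^{\otimes m})/m$. By Fekete's lemma (superadditivity), $f^\infty(\rho)=\sup_m g_m(\rho)$, so $g_m\le f^\infty$ pointwise. It therefore suffices to show: for every $\delta>0$ there is $M$ such that $g_m(\rho)\ge f^\infty(\rho)-\delta$ for all $m\ge M$ and all $\rho\in\mS$.

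\textbf{Step 1 (open cover).} For each $\rho_0\in\mS$, pick $k=k(\rho_0)$ with $g_k(\rho_0)>f^\infty(\rho_0)-\delta/4$. The map $\rho\mapsto\rho^{\otimes k}$ is continuous and $f_k$ is continuous, so $g_k$ is continuous on $\mS$. Since $f^\infty$ is continuous on $\mS$ by assumption, the set
\[
U_{\rho_0}=\{\rho\in\mS:\ g_k(\rho)>f^\infty(\rho)-\delta/2\}
\]
is relatively open in $\mS$ and contains $\rho_0$. By compactness, finitely many such sets, with indices $k_1,\dots,k_L$, cover $\mS$.

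\textbf{Step 2 (extend from $k$ to all large $m$).} Fix $\rho\in U_{\rho_i}$ and write $k=k_i$. For $m\ge k$, write $m=qk+r$ with $0\le r<k$. Superadditivity and non-negativity of $f_r$ (with $f_0\equiv 0$ by convention, or dropping the remainder term) give
\[
f_m(\rho^{\otimes m})\ge q\,f_k(\rho^{\otimes k}),
\]
so
\[
g_m(\rho)\ge\frac{qk}{m}\,g_k(\rho)\ge\Bigl(1-\frac{k}{m}\Bigr)\bigl(f^\infty(\rho)-\delta/2\bigr).
\]
To bound the loss term uniformly, use that $f^\infty$ is continuous on the compact set $\mS$, hence bounded by some $C<\infty$. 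Then $g_m(\rho)\ge f^\infty(\rho)-\delta/2-(k/m)C$. Choosing $M\ge\max_i 2k_iC/\delta$ gives $g_m(\rho)\ge f^\infty(\rho)-\delta$ for all $m\ge M$, uniformly on $\mS$.

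\textbf{Main obstacle.} The only delicate points are (i) that the superadditivity step needs $f_r\ge0$ for the remainder block, which is given, and (ii) that the multiplicative loss factor $qk/m$ requires a uniform bound on $f^\infty$, which follows from continuity on a compact set. Everything else is the standard compactness argument.
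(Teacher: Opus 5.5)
Your proposal is correct and takes essentially the same route as the paper. Both arguments use Fekete's lemma to get $g_m\le f^\infty$, cover $\mS$ by finitely many neighborhoods on which a single block length $k_i$ is $\delta$-good, and then write $m=qk_i+r$ and apply superadditivity together with non-negativity of the remainder term. The only differences are minor: the paper builds its neighborhoods with a $\delta/3$ triangle-inequality argument, whereas you take the relatively open set where the continuous function $g_k-f^\infty$ exceeds $-\delta/2$. You also justify the uniform bound on $f^\infty$ by its continuity on the compact set $\mS$, which is slightly cleaner than the paper's unexplained assertion that such a constant $F$ exists.
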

\begin{proof}

Since $f_m$ is superadditive, Fekete's lemma ensures that 
\bal
 f^\infty(\rho) = \sup_m \frac{f_m\qty(\rho^{\otimes m})}{m}
\eal
for every $\rho\in\mD(\mH)$.
This implies that for every integer $m$ and every $\rho$, it holds that
\bal
 f^\infty(\rho)\geq \frac{f_m(\rho^{\otimes m})}{m}.
\eal
Therefore, it suffices to show that, for every $\delta>0$, there is a sufficiently large integer $m$---which is independent of the input $\rho$---such that 
\bal
 f^\infty(\rho)-\frac{f_m\qty(\rho^{\otimes m})}{m}\leq \delta,\quad \forall \rho.
\eal

Let $\delta>0$ be arbitrary.
The pointwise convergence of $\frac{f_m(\rho^{\otimes m})}{m}$ guarantees that, for every $\rho\in\mS$, there exists an integer $m_\rho$ such that 
\bal
 f^\infty(\rho)-\frac{f_{m_\rho}(\rho^{\otimes m_\rho})}{m_\rho}\leq\delta/3.
\eal
Since $f_{m_\rho}$ is continuous on $\mD(\mH^{\otimes m_{\rho}})$ and $f^\infty$ is continuous on $\mS$, there exists a neighborhood $\mB(\rho)\subset \mS$ of $\rho$ such that, for every $\sigma\in\mB(\rho)$, it satisfies 
\bal
\abs{\frac{f_{m_\rho}(\sigma^{\otimes m_\rho})}{m_\rho}-\frac{f_{m_\rho}(\rho^{\otimes m_\rho})}{m_\rho}}\leq \delta/3,\quad
\abs{f^\infty(\rho)-f^\infty(\sigma)}\leq \delta/3.
\eal
This ensures for every $\sigma\in\mB(\rho)$ that
\bal
 f^\infty(\sigma)-\frac{f_{m_\rho}(\sigma^{\otimes m_\rho})}{m_\rho}&<\abs{\frac{f_{m_\rho}(\sigma^{\otimes m_\rho})}{m_\rho}-\frac{f_{m_\rho}(\rho^{\otimes m_\rho})}{m_\rho}}+\abs{\frac{f_{m_\rho}(\rho^{\otimes m_\rho})}{m_\rho}-f^\infty(\rho)}+\abs{f^\infty(\rho)-f^\infty(\sigma)}\\
 &\leq \delta.
 \label{eq:neighborhood deviation}
\eal

Note that $\cup_{\rho\in\mS}\mB(\rho)$ is a cover of $\mS$. 
Since $\mS$ is compact, there is a finite subcover $\cup_{i=1}^K \mB(\rho_i)$ of $\mS$ for some integer $K$ with some finite set $\qty{\rho_i}_{i=1}^K$ of states in $\mS$.
Because of \eqref{eq:neighborhood deviation}, every state $\rho$ comes with some integer $i$ such that
\bal
f^\infty(\rho)-\frac{f_{m_{\rho_i}}\qty(\rho^{\otimes m_{\rho_i}})}{m_{\rho_i}}\leq \delta. 
\label{eq:continuity subcover}
\eal
Let us write an arbitrary integer $m$ as 
\bal
 m = \alpha m_{\rho_i} + \beta,\quad 0\leq \beta<m_{\rho_i}
\eal
with some nonnegative integers $\alpha$ and $\beta$ such that $\beta<m_{\rho_i}$.
Then, the superadditivity of $f_m$ ensures that 
\bal
 f_m(\rho^{\otimes m})\geq \alpha f_{m_{\rho_i}}(\rho^{\otimes m_{\rho_i}}) + f_\beta(\rho^{\otimes \beta}).
\eal
This gives
\bal
 f^\infty(\rho)-\frac{f_m(\rho^{\otimes m})}{m} &\leq f^\infty(\rho) - \frac{\alpha m_{\rho_i}}{m} \frac{f_{m_{\rho_i}}(\rho^{\otimes m_{\rho_i}})}{m_{\rho_i}} - \frac{f_\beta(\rho^{\otimes \beta})}{m}\\
 &\leq f^\infty(\rho)\qty[1-\frac{\alpha m_{\rho_i}}{m}] + \frac{\alpha m_{\rho_i}}{m}\delta  - \frac{f_\beta(\rho^{\otimes \beta})}{m}
\eal
where in the second line we used \eqref{eq:continuity subcover}. 

We now note that $\beta<m_{\rho_i}$, as well as $f_\beta(\rho^{\otimes \beta})\geq 0$ and $\sup_{\rho}f^\infty(\rho)< F$ for some constant $F$ independent of $m$.
This ensures that there exists an integer $M_i$ such that for every integer $m\geq M_i$, it satisfies  
\bal
 f^\infty(\rho)\qty[1-\frac{\alpha m_{\rho_i}}{m}] + \frac{\alpha m_{\rho_i}}{m}\delta  - \frac{f_\beta(\rho^{\otimes \beta})}{m}\leq 2\delta
\eal
for \emph{every} $\rho$ satisfying \eqref{eq:continuity subcover} for the specific state $\rho_i$.
Therefore, letting $M_{\max} \coloneqq \max_i M_i$---which is finite because $\qty{M_i}_{i=1}^K$ is a finite set---we get that for every integer $m\geq M_{\max}$, it holds that
\bal
 f^\infty(\rho) - \frac{f_m\qty(\rho^{\otimes m})}{m}\leq 2\delta
\eal
for every $\rho\in\mS$.
Since $\delta>0$ was arbitrary, we have shown the statement. 

\end{proof}

The uniform convergence of $E_d^\leq$ follows from the above two results.

\begin{cor}\label{cor:uniform continuity of distillable entanglement}
Let $\mS\subset\mbD_{>0}$ be an arbitrary compact subset of the set of full-rank distillable states. 
Then, the series $\qty{\frac{E_d^\leq(\rho_{AB}^{\otimes m})}{m}}_m$ converges to $E_d(\rho_{AB})$ uniformly on $\mS$.
\end{cor}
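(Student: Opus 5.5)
The plan is to apply \cref{lem:lifting pointwise to uniform continuity} with $f_m(\omega)\coloneqq E_d^\leq(\omega)$ for $\omega\in\mD(\mH_{AB}^{\otimes m})$, with the compact set $\mS\subset\mbD_{>0}$ given in the statement. We then verify its three hypotheses in turn, using \cref{lem:pointwise continuity of distillable entanglement} and \eqref{eq:distillable entanglement two-way formal}.

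\emph{Non-negativity and boundedness.} Non-negativity is immediate: $\Lambda$ may discard the input and prepare a product state, whose coherent information is $0$. Boundedness, $f^\infty=E_d\le\log d_A$, supplies the constant $F$ used in the proof of the lemma.

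\emph{Superadditivity.} Given LOCC maps $\Lambda_1$ and $\Lambda_2$ that are nearly optimal for $\rho^{\otimes m}$ and $\rho^{\otimes n}$, the map $\Lambda_1\otimes\Lambda_2$ is LOCC. Coherent information is additive on tensor products, so $E_d^\leq(\rho^{\otimes(m+n)})\ge E_d^\leq(\rho^{\otimes m})+E_d^\leq(\rho^{\otimes n})$ after letting the slack go to zero.

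\emph{Continuity of $f^\infty$.} The limit $f^\infty$ equals $E_d$ by \eqref{eq:distillable entanglement two-way formal}. Its continuity on $\mS\subset\mbD_{>0}$ is \cref{lem:pointwise continuity of distillable entanglement}.

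\emph{Continuity of each $f_m=E_d^\leq$ on the $m$-copy space.} This is the main obstacle. Here I would use the canonical representation \eqref{eq:another form LOCC measurement}. For any fixed $\Lambda_\mM$ with outcome register $E$, the map $\omega\mapsto\Lambda_\mM(\omega)$ is a channel and is therefore contractive in trace norm. Alice's system keeps dimension $d_A^m$. Applying \eqref{eq:continuity coherent information} with conditioned dimension $d_A^m$ then gives, for $\frac12\|\omega-\omega'\|_1\le\epsilon$,
\[
\abs{I_c(A|BE)_{\Lambda_\mM(\omega)}-I_c(A|BE)_{\Lambda_\mM(\omega')}}\le 2\epsilon\, m\log d_A+(1+\epsilon)h\qty(\frac{\epsilon}{1+\epsilon}).
\]
This bound is uniform in $\mM$. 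Taking the supremum over $\mM$ on both sides shows that $E_d^\leq$ is uniformly continuous on $\mD(\mH_{AB}^{\otimes m})$. The point to check is that the bound does not depend on the unbounded dimension of $BE$, which is exactly what the Alice-side-only continuity bound guarantees. In the lemma, continuity is only used along states of the form $\sigma^{\otimes m}$. Since $\sigma\mapsto\sigma^{\otimes m}$ is continuous, continuity on the full $m$-copy space certainly suffices.

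With all hypotheses verified, \cref{lem:lifting pointwise to uniform continuity} yields uniform convergence of $E_d^\leq(\rho_{AB}^{\otimes m})/m$ to $E_d(\rho_{AB})$ on $\mS$.
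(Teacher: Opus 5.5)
Your proposal is correct and follows essentially the same route as the paper: you verify the hypotheses of \cref{lem:lifting pointwise to uniform continuity} for $f_m=E_d^\leq$ using superadditivity under tensor products of LOCC maps, continuity of $E_d^\leq$ via the canonical measurement form \eqref{eq:another form LOCC measurement} together with a continuity bound that depends only on Alice's dimension, and continuity of $E_d$ on $\mS$ from \cref{lem:pointwise continuity of distillable entanglement}. The only difference is that you derive the (in fact uniform) continuity of $E_d^\leq$ explicitly from \eqref{eq:continuity coherent information} and check non-negativity and boundedness, whereas the paper cites Hayashi's book for the continuity step.
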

\begin{proof}
    The continuity (in fact, asymptotic continuity) of $E_d^\leq$ was shown in Ref.~\cite{hayashi2016mathematical} using the form \eqref{eq:another form LOCC measurement} and the asymptotic continuity of coherent information that only involves the dimension of the system $A$.
    $E_d^\leq$ is also superadditive as can easily be seen from the form \eqref{eq:another form LOCC measurement} and the fact that the tensor product of LOCC operations is also LOCC.    
    Finally, $E_d$ is continuous on $\mS$ because of \cref{lem:pointwise continuity of distillable entanglement}.
    Therefore, since $\mS$ is compact, we can apply \cref{lem:lifting pointwise to uniform continuity} to obtain the statement. 
\end{proof}

\subsubsection{Robustness of $E_d$ against noise}\label{sec:robustness}

Here, we show the results that justify our approach of ``perturb input to make it full rank and reduce the perturbation strength''.
The key property of $E_d$ toward this argument is its lower semicontinuity.

\begin{lem}\label{lem:lower semicontinous}
 The distillable entanglement $E_d:\mD(\mH)\to \mbR$ is lower semicontinuous.

\end{lem}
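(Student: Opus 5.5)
We prove lower semicontinuity by writing $E_d$ as a supremum of continuous functions:
\bal
E_d(\rho_{AB})=\sup_m \frac{E_d^\leq(\rho_{AB}^{\otimes m})}{m}.
\eal
Two facts justify this. First, $E_d^\leq$ is superadditive, as already noted in the proof of \cref{cor:uniform continuity of distillable entanglement}: tensor products of LOCC maps in $\mT_{\rm LOCC}$ are again of the canonical measurement form, and coherent information is additive on product states. Second, the regularized limit in \eqref{eq:distillable entanglement two-way formal} exists for every state. Fekete's lemma then identifies the limit with the supremum over $m$ for every $\rho_{AB}\in\mD(\mH)$, not only on $\mbD_{>0}$.

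Next, each map $\rho_{AB}\mapsto E_d^\leq(\rho_{AB}^{\otimes m})/m$ is continuous on all of $\mD(\mH)$. The map $\rho\mapsto\rho^{\otimes m}$ is continuous, so it suffices that $E_d^\leq$ is continuous on $\mD(\mH^{\otimes m})$. We use the representation \eqref{eq:another form LOCC measurement}. Every $\Lambda_\mM\in\mT_{\rm LOCC}$ is trace-norm contractive and leaves Alice's dimension unchanged. The bound \eqref{eq:continuity coherent information} therefore controls $|I_c(A|BE)_{\Lambda_\mM(\sigma)}-I_c(A|BE)_{\Lambda_\mM(\tilde\sigma)}|$ by a quantity depending only on $\epsilon=\frac12\|\sigma-\tilde\sigma\|_1$ and $d_A^m$, uniformly over $\mM$. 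A supremum of a family of functions sharing one modulus of continuity keeps that modulus, so $E_d^\leq$ is continuous. This is the same argument cited from Ref.~\cite{hayashi2016mathematical}.

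Finally, a pointwise supremum of continuous functions is lower semicontinuous. Concretely, fix $\rho$ and $\delta>0$, and pick $m$ with $E_d^\leq(\rho^{\otimes m})/m\geq E_d(\rho)-\delta/2$. By continuity there is a neighborhood of $\rho$ on which every $\sigma$ satisfies
\bal
E_d(\sigma)\geq E_d^\leq(\sigma^{\otimes m})/m\geq E_d(\rho)-\delta.
\eal
Hence $\liminf_{\sigma\to\rho}E_d(\sigma)\geq E_d(\rho)$.

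The main obstacle is justifying the supremum formula at every state, including undistillable and rank-deficient ones, since the paper only invokes \eqref{eq:distillable entanglement two-way formal} from Ref.~\cite{Devetak2005distillation}. Superadditivity and Fekete handle the limit-equals-supremum step. If some edge case is not covered, a fallback suffices. The hashing bound applied to $\Lambda(\rho^{\otimes m})$ gives $E_d(\rho)\geq E_d^\leq(\rho^{\otimes m})/m$ for every $m$, so $E_d\geq\sup_m E_d^\leq(\rho^{\otimes m})/m$. Together with \eqref{eq:distillable entanglement two-way formal}, which already gives $E_d=\lim_m E_d^\leq(\rho^{\otimes m})/m\leq\sup_m E_d^\leq(\rho^{\otimes m})/m$, this yields the identity.
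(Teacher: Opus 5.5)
Your proof is correct and shares the paper's skeleton. Superadditivity of $E_d^\leq$ and Fekete's lemma turn the regularized limit into a supremum, and a pointwise supremum of continuous functions is lower semicontinuous.

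The difference is the family you take the supremum over. The paper flattens the double supremum and writes $E_d(\rho_{AB})=\sup_{n,\Lambda} I_c(A|B)_{\Lambda(\rho_{AB}^{\otimes n})}/n$. Each member is then continuous for a trivial reason: $\Lambda$ is a fixed finite-output LOCC map, and coherent information is continuous in finite dimensions. No uniformity over $\Lambda$ is needed.

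You keep the inner supremum and take $\sup_m E_d^\leq(\rho_{AB}^{\otimes m})/m$. This forces you to prove that $E_d^\leq$ itself is continuous, which requires the canonical measurement representation and Winter's dimension-uniform continuity bound. Your argument for that step is sound: the bound depends only on $d_A^m$, is monotone in $\epsilon$, and a supremum of functions sharing one modulus keeps it.

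So for this lemma your route uses strictly more machinery than the paper's. In return it gives the stronger intermediate fact that each approximant is uniformly continuous with an explicit modulus. The paper needs that fact anyway for \cref{cor:uniform continuity of distillable entanglement}.

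Your fallback for the supremum identity is also valid. It combines LOCC monotonicity with the hashing bound to get $E_d(\rho_{AB})\geq E_d^\leq(\rho_{AB}^{\otimes m})/m$, and it does not depend on superadditivity at all.
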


\begin{proof}

Eqs.~\eqref{eq:distillable entanglement two-way formal} and \eqref{eq:one-shot distillable entanglement coherent information supremum LOCC} give
\bal
 E_d(\rho_{AB}) = \lim_{n\to\infty}\sup_{\Lambda\in {\rm LOCC}}\frac{I_c(A|B)_{\Lambda(\rho_{AB}^{\otimes n})}}{n} = \sup_{n\in \mbZ_{>0},\Lambda\in{\rm LOCC}} \frac{I_c(A|B)_{\Lambda(\rho_{AB}^{\otimes n})}}{n}
\eal
where the second equality is because of Fekete's lemma, recalling that $E_d^{\leq}$ is superadditive. 
Since the supremum of continuous functions is lower semicontinuous~\cite[Proposition~1.26]{rockafellar1998variational}, and $I_c(A|B)_{\Lambda(\cdot)}:\mD(\mH^{\otimes n})\to \mbR$ is continuous for every fixed $n$ and $\Lambda$, $E_d(\cdot)$ is lower semicontinuous.
\end{proof}

The following result ensures that making the perturbation strength vanish afterward results in the correct distillable entanglement. 

\begin{lem}\label{lem:robustness}
Let $\rho_{AB,\epsilon_n}\coloneqq (1-\epsilon_n) \rho_{AB} + \epsilon_n \sigma_{AB}$ for an arbitrary state $\sigma_{AB}$ and series $\{\epsilon_n\}_n$ of positive numbers. 
Then, 
   \bal
    \lim_{n\to\infty}E_d(\rho_{AB,\epsilon_n}) = E_d(\rho_{AB})
   \eal
for an arbitrary separable state $\sigma_{AB}$ and an arbitrary series $\{\epsilon_n\}_n$ such that $\lim_{n\to\infty} \epsilon_n = 0$.   
\end{lem}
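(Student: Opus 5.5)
We prove the two inequalities $\liminf_{n\to\infty}E_d(\rho_{AB,\epsilon_n})\geq E_d(\rho_{AB})$ and $\limsup_{n\to\infty}E_d(\rho_{AB,\epsilon_n})\leq E_d(\rho_{AB})$ separately.

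\emph{Lower bound.} Since $\frac12\norm{\rho_{AB,\epsilon_n}-\rho_{AB}}_1\leq\epsilon_n\to0$, the states $\rho_{AB,\epsilon_n}$ converge to $\rho_{AB}$. \cref{lem:lower semicontinous} then gives
\bal
\liminf_{n\to\infty}E_d(\rho_{AB,\epsilon_n})\geq E_d(\rho_{AB}).
\eal
This step does not use separability of $\sigma_{AB}$.

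\emph{Upper bound.} We use the separability of $\sigma_{AB}$ through a flagged extension. Consider
\bal
\tau_{ABF}\coloneqq(1-\epsilon_n)\rho_{AB}\otimes\dm{0}_{F}+\epsilon_n\sigma_{AB}\otimes\dm{1}_F,
\eal
where a copy of the flag $F$ is held by both Alice and Bob (classically correlated flags $F_AF_B$). Tracing out the flags is a local operation that maps $\tau^{\otimes k}$ to $\rho_{AB,\epsilon_n}^{\otimes k}$. Hence $E_d(\rho_{AB,\epsilon_n})\leq E_d(\tau)$ by LOCC monotonicity, which follows from the definition \eqref{eq:distillable entanglement def}.

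The task then reduces to showing $E_d(\tau)\leq(1-\epsilon_n)E_d(\rho_{AB})$. On $\tau^{\otimes k}$, both parties can read the flags, so the state is a known mixture over flag strings $s$ of $\rho_{AB}^{\otimes |s|_0}\otimes(\text{separable})$. Since separable ancillas cannot increase $E_d$ (a separable state can be prepared by LOCC), distillation is effectively performed on about $(1-\epsilon_n)k$ copies of $\rho_{AB}$. This suggests $E_d(\tau)=(1-\epsilon_n)E_d(\rho_{AB})$, giving
\bal
\limsup_{n\to\infty}E_d(\rho_{AB,\epsilon_n})\leq E_d(\rho_{AB}).
\eal

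\emph{Main obstacle.} The claimed upper bound $E_d(\tau)\leq(1-\epsilon_n)E_d(\rho_{AB})$ for flagged mixtures needs care. Achievability of this value is easy, but the converse is the delicate part, since $E_d$ itself has no single-letter formula. For the converse I would argue via typicality. With probability tending to one, the number of $0$-flags is at most $(1-\epsilon_n+\gamma)k$. Conditioned on a flag string, the post-flag state is $\rho_{AB}^{\otimes j}\otimes\omega_{\rm sep}$. Any LOCC protocol on this state can be simulated by an LOCC protocol on $\rho_{AB}^{\otimes j}$ alone, because the separable $\omega_{\rm sep}$ can be prepared locally.

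A protocol achieving rate $r$ on $\tau$ would then yield, after averaging over typical $j$, Bell states at rate about $r/(1-\epsilon_n)$ from $\rho_{AB}$. Making this rigorous requires handling the variable $j$ and the average error. I would try to do so by a standard Markov-type selection of one typical $j$ with small conditional error. Failing that, one can instead bound $E_d$ above by a convex, LOCC-monotone, asymptotically continuous quantity. This fallback is weaker, so the flag argument is the main route.
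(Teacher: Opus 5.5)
Your lower bound is exactly the paper's argument: $\rho_{AB,\epsilon_n}\to\rho_{AB}$ combined with \cref{lem:lower semicontinous}. The upper bound, however, is not actually established in your proposal. You reduce it to the sharp flagged converse $E_d(\tau)\leq(1-\epsilon_n)E_d(\rho_{AB})$, but you only sketch that converse (``this suggests'', ``I would try'') via typicality and a Markov-type selection of one value of $j$. Making it rigorous would take real work. For each block length you would need a $j$ with small conditional error, and then you would have to fill in a protocol for every $n$, as \eqref{eq:distillable entanglement def} demands. As written, this is the missing piece of your proof.

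The sharp converse is unnecessary. You only need $\limsup_{n}E_d(\rho_{AB,\epsilon_n})\leq E_d(\rho_{AB})$, and that follows in one line from the LOCC monotonicity you already invoke. Because $\sigma_{AB}$ is separable, the single-copy map $X\mapsto(1-\epsilon_n)X+\epsilon_n\Tr(X)\,\sigma_{AB}$ is LOCC. Alice samples a biased coin and sends the outcome to Bob. On the outcome of probability $\epsilon_n$, they discard the input and prepare $\sigma_{AB}$ by LOCC. This map sends $\rho_{AB}$ to $\rho_{AB,\epsilon_n}$. Applying it copywise before any distillation protocol therefore gives $E_d(\rho_{AB,\epsilon_n})\leq E_d(\rho_{AB})$ for every $n$, which is precisely the paper's argument and the only place separability is used. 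Equivalently, in your own framework, $\tau$ is LOCC-preparable from a single copy of $\rho_{AB}$. Hence $E_d(\tau)\leq E_d(\rho_{AB})$ immediately, and the factor $(1-\epsilon_n)$ does not matter since $\epsilon_n\to0$.

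A minor correction to your sketch: separable ancillas are prepared by LOCC using shared randomness, not purely locally.
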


\begin{proof}

Since $E_d$ is lower semicontinuous because of \cref{lem:lower semicontinous}, we get 
\bal
\liminf_{n\to\infty} E_d(\rho_{AB,\epsilon_n})\geq E_d(\rho_{AB}).
\eal
On the other hand, since $\rho_{AB,\epsilon_n}$ can be created from $\rho_{AB}$ by LOCC, the monotonicity of $E_d$ gives 
\bal
E_d(\rho_{AB,\epsilon_n})\leq E_d(\rho_{AB})
\eal
for an arbitrary $\epsilon_n$. 
This particularly means that 
\bal
 \limsup_{n\to\infty} E_d(\rho_{AB,\epsilon_n})\leq E_d(\rho_{AB}).
\eal
Combining the above, we get 
\bal
 E_d(\rho_{AB})\leq \liminf_{n\to\infty} E_d(\rho_{AB,\epsilon_n}) \leq \limsup_{n\to\infty} E_d(\rho_{AB,\epsilon_n})\leq  E_d(\rho_{AB}),
\eal
showing the statement.
\end{proof}

\subsubsection{Protocol}\label{sec:protocol}

We now formally describe the protocol.  
Fix an arbitrary real number 
\bal
\delta>0
\label{eq:delta}
\eal
for the final target accuracy in \eqref{eq:overflow probability}.
For each integer $n$, Alice and Bob follow the strategy below.

\setcounter{protocolstep}{-1}
\stepparagraph{Prepare the initial state}
\label{para:prepare initial state}

Besides the unknown input $\rho_{AB}^{\otimes n}$, Alice and Bob prepare $\ket{0}_A\ket{0}_B$ in the output register.  This product state is retained whenever no feasible candidate is found.

\stepparagraph{Tomography to obtain a confidence region}\label{para:tomography}
Let
\bal
 \alpha_n\coloneqq \ceil{n^{1/2}},\qquad \eta_n\coloneqq n^{-2}.
 \label{eq:parameter tomography}
\eal
Alice and Bob use the first $\alpha_n$ copies for local informationally complete tomography and obtain an estimate $\widehat\rho_{AB}^{(n)}$.  
Let 
\bal
G_n(\rho_{AB})\coloneqq \lset \widehat\rho_{AB} \sbar\norm{\widehat\rho_{AB}-\rho_{AB}}_1\leq \epsilon_n\rset
 \label{eq:common tomography good event}
\eal
be the set of good estimates when the true state is $\rho_{AB}$.
Then, it is well known that using $\alpha_n$ samples ensures that the estimate $\widehat\rho_{AB}^{(n)}$ is contained in $G_n(\rho_{AB})$ with probability at least $1-\eta_n$ where
\bal
\epsilon_n=O\qty(\sqrt{\frac{\log(1/\eta_n)}{\alpha_n}})=O\qty(n^{-1/4}\sqrt{\log n}).
 \label{eq:common tomography radius}
\eal

\stepparagraph{Deciding perturbation strength, block size, and LOCC measurement}\label{para:deciding parameters}

In the later steps, Alice and Bob perturb the input with some noise strength (\ref{para:perturb}), make an LOCC measurement (\ref{para:LOCC}), and apply state merging (\ref{para:state merging}). 
In the current step, they decide the parameters used in the following steps. 
They do so by identifying the set of parameters by evaluating the performance of the final state merging step realized by each of the potential parameters, then choosing the one among them so that the desired asymptotic performance is guaranteed. 
Therefore, this step already contains the analysis of the protocol that is run in the later steps. 

In \ref{para:perturb}, in order for us to employ \cref{cor:uniform continuity of distillable entanglement}, we consider perturbing the input to make it full rank. 
Let $\{\xi_k\}_k$ be a decreasing sequence of real numbers such that $\xi_k\xrightarrow[k\to\infty]{} 0$. 
For instance, we take $\xi_k=1/k$. 
Let
\bal
\rho_{AB,\xi_k}\coloneqq (1-\xi_k)\rho_{AB}+\xi_k\frac{\mdI}{d_Ad_B}
\eal
be the perturbed input states corresponding to each noise strength $\xi_k$.
We also consider the perturbed estimate
\bal
\widehat\rho_{AB,\xi_k}^{(n)}\coloneqq (1-\xi_k)\widehat\rho_{AB}^{(n)}+\xi_k\frac{\mdI}{d_Ad_B}.
\eal
If $\rho_{AB}\in G_n$ defined in \eqref{eq:common tomography good event}, it is ensured that 
\bal
\norm{\widehat\rho_{AB,\xi_k}^{(n)}-\rho_{AB,\xi_k}}_1
 \leq (1-\xi_k)\epsilon_n
 \label{eq:perturbed common radius}
\eal
for every $k$.
For each candidate $k$, define the closed confidence set
\bal
 \mS_{n,k}\coloneqq \lset\sigma_{AB}\in\mD(\mH_{AB})\sbar \norm{\sigma_{AB}-\widehat\rho_{AB,\xi_k}^{(n)}}_1\leq (1-\xi_k)\epsilon_n\rset.
 \label{eq:candidate confidence set}
\eal

Alice and Bob choose an appropriate $k\in \qty{1,\ldots, n}$ as follows.
The overall idea is that they would like to choose as large $k$ as possible, as a larger $k$ corresponds to a smaller noise strength, which would make the perturbed state close to the true state. 
At least, they would like to make sure that their way of choosing $k$ grows with $n$ so that in the limit of $n\to\infty$, it also ensures that $k\to\infty$. 
However, if they take $k$ that grows with $n$ too fast, the perturbed state can approach the boundary too fast, which may result in the case where the confidence region $\mS_{n,k}$ is not contained in the set of full-rank distillable states and does not allow them to choose an appropriate block size $m$ due to \cref{cor:uniform continuity of distillable entanglement}.

In addition, the strategy of choosing $k$ needs to make sure that, not only is the corresponding confidence region included in the interior of the set of distillable states, but one can also take an appropriate block size $m$ for state merging and a preprocessing LOCC measurement $\Lambda_{\widehat \mM}$ such that the target rate and fidelity are guaranteed to be achieved in the $n\to\infty$ limit.

We address this issue by not predetermining the scaling of $k$ with $n$, but by checking whether each $k\in \qty{1,\ldots, n}$ is feasible, and choosing the maximum one among them.
Here, Alice and Bob examine each $k$ by checking the above criteria by classical computation, i.e., whether the confidence region is included in the interior of the set of distillable states, and whether the target rate and infidelity can be achieved after state merging.

Here is the concrete protocol that Alice and Bob take. 
For each $k$, they first see whether 
\bal
 \mS_{n,k}\subset\mbD_{>0}.
 \label{eq:candidate inside D}
\eal
where $\mbD_{>0}$ is the set of full-rank distillable states defined in \cref{cor:uniform continuity of distillable entanglement}.
They reject this candidate $k$ if this is not satisfied. 

If \eqref{eq:candidate inside D} holds, $\mS_{n,k}$ is compact and \cref{cor:uniform continuity of distillable entanglement} applies.
They see whether the size $(1-\xi_k)\epsilon_n$ of the confidence region given in \eqref{eq:perturbed common radius} is small enough to guarantee the desired achievable rate. 
Specifically, for $m\in\mbZ_{>0}$ and $t\geq0$, define
\bal
 \omega_m(t)\coloneqq 2mt\log d_A +(1+mt)h\qty(\frac{mt}{1+mt}).
 \label{eq:Ic continuity modulus}
\eal
Then, the continuity of coherent information in \eqref{eq:continuity coherent information} ensures that, for every state $\sigma_{AB}$, $\widehat\sigma_{AB}$, and LOCC measurement channel $\Lambda_\mM$,
\bal
\abs{\frac{I_c(A|BE)_{\Lambda_{\mM}(\sigma_{AB}^{\otimes m})}}{m}-\frac{I_c(A|BE)_{\Lambda_{\mM}(\widehat\sigma_{AB}^{\otimes m})}}{m}}
 \leq \omega_m\qty(\frac{1}{2}\norm{\sigma_{AB}-\widehat\sigma_{AB}}_1)
 \label{eq:Ic common continuity}
\eal
where we assume that $\frac{m}{2}\norm{\sigma_{AB}-\widehat\sigma_{AB}}_1\leq 1$.

We use the continuity bound \eqref{eq:Ic common continuity} to guarantee that the coherent information of the estimated perturbed state is close to that of the true perturbed state.
But, we also observe that the continuity bound \eqref{eq:Ic common continuity} depends on the choice of $m$, which we choose so that it approximates the distillable entanglement by the coherent-information-based quantity $E_d^{\leq}/m$ well enough based on the uniform convergence of given in \cref{cor:uniform continuity of distillable entanglement}.
Errors in these two---the deviation of the coherent information of the estimated perturbed state from that of the true perturbed state, and the deviation of $E_d^\leq/m$ from $E_d$---should be kept small simultaneously. 
If that is impossible, Alice and Bob reject the current candidate $k$.

Alice and Bob accomplish this check as follows. 
We first introduce a positive constant $\delta_0>0$ given by 
\bal
 \delta_0 \coloneqq \frac{\delta}{8},
 \label{eq:delta_0}
\eal
which we choose in this way so that the final target accuracy $\delta$ can be achieved in the end.
Alice and Bob begin by choosing an integer $m$ such that the deviation of $E_d^\leq/m$ from $E_d$ is less than $\delta_0$ for every state in $\mS_{n,k}$. 
We write such $m$ as $m_{\delta_0}(\mS_{n,k})$ where we define $m_\zeta(\mS)$ for arbitrary $\zeta>0$ and compact $\mS\subset\mbD_{>0}$ by
\bal
 m_\zeta(\mS)\coloneqq \min\lset M\in\mbZ_{>0}\sbar E_d(\sigma_{AB})-\frac{E_d^\leq(\sigma_{AB}^{\otimes m})}{m}<\zeta\ \text{for every $\sigma_{AB}\in\mS$ and every $m\geq M$}\rset.
 \label{eq:uniform convergence threshold}
\eal
\cref{cor:uniform continuity of distillable entanglement} guarantees that $m_\zeta(\mS)<\infty$.
Alice and Bob check whether this choice of $m=m_{\delta_0}(\mS_{n,k})$ is small enough so that the right-hand side of \eqref{eq:Ic common continuity} can be upper bounded by $\delta_0$ for two states in $\mS_{n,k}$---one we think of as the true perturbed state and the other the estimated perturbed state.
Namely, Alice and Bob reject $k$ unless
\bal
 \omega_{m_{\delta_0}(\mS_{n,k})}((1-\xi_k)\epsilon_n)\leq\delta_0.
 \label{eq:tomography accuracy feasibility}
\eal

For a candidate that has survived the above tests \eqref{eq:candidate inside D} and \eqref{eq:tomography accuracy feasibility}, Alice and Bob use the classical description of $\widehat\rho_{AB,\xi_k}^{(n)}$ to choose a finite-output LOCC measurement channel $\Lambda_{\mM_{n,k}}$ such that
\bal
\frac{I_c(A|BE)_{\Lambda_{\mM_{n,k}}\qty[(\widehat\rho_{AB,\xi_k}^{(n)})^{\otimes m_{\delta_0}(\mS_{n,k})}]}}{m_{\delta_0}(\mS_{n,k})}
 \geq \frac{E_d^\leq\qty[(\widehat\rho_{AB,\xi_k}^{(n)})^{\otimes m_{\delta_0}(\mS_{n,k})}]}{m_{\delta_0}(\mS_{n,k})}-\delta_0.
 \label{eq:near optimal candidate measurement}
\eal
Among all finite-output LOCC POVMs satisfying \eqref{eq:near optimal candidate measurement}, they choose the one with the smallest number of outcomes.
We write such smallest number of outcomes as $q_{n,k}$.  

They now investigate whether the state merging in \ref{para:state merging} with this choice of parameters achieves the target rate and fidelity.
Specifically, they first check whether 
\bal
\frac{I_c(A|BE)_{\Lambda_{\mM_{n,k}}\qty[(\widehat\rho_{AB,\xi_k}^{(n)})^{\otimes m_{\delta_0}(\mS_{n,k})}]}}{m_{\delta_0}(\mS_{n,k})}-\delta_0-\frac{\delta_0}{m_{\delta_0}(\mS_{n,k})}>0,
\label{eq:rate test}
\eal
where the second term is due to \eqref{eq:tomography accuracy feasibility} and the third term is the rate loss corresponding to $\delta$ in \cref{lem:universal state merging}, as we see later.
If this is not satisfied, they reject this choice of $k$.

If it is satisfied, they consider applying the state merging to $l_{n,k}$ blocks of the state $\Lambda_{\mM_{n,k}}(\sigma^{\otimes m_{\delta_0}(\mS_{n,k})})$ with $\sigma\in\mS_{n,k}$, where 
\bal
 l_{n,k}\coloneqq \floor{\frac{n-\alpha_n}{m_{\delta_0}(\mS_{n,k})}},
\label{eq:candidate number blocks}
\eal
and $\alpha_n=o(n)$ is the number of copies used for the state tomography in \ref{para:tomography}.
They check whether 
\bal
l_{n,k}>n_0(\delta_0,d_{n,k})
 \label{eq:block size condition}
\eal
where $n_0$ is the function in \cref{lem:universal state merging}, and 
\bal
d_{n,k}\coloneqq d_A^{m_{\delta_0}(\mS_{n,k})}d_B^{m_{\delta_0}(\mS_{n,k})}q_{n,k}
\label{eq:output dimension}
\eal
is the dimension of $\Lambda_{\mM_{n,k}}\qty[(\widehat\rho_{AB,\xi_k}^{(n)})^{\otimes m_{\delta_0}(\mS_{n,k})}]$, recalling that $q_{n,k}$ is the number of measurement outcomes of $\mM_{n,k}$ defined below \eqref{eq:near optimal candidate measurement}.
If it is not satisfied, they reject this choice of $k$.

When it is passed, \cref{lem:universal state merging} and \eqref{eq:tomography accuracy feasibility} ensure that the rate
\bal
\frac{\log d_{\rm fin}}{n}&\geq   \qty(\frac{I_c(A|BE)_{\Lambda_{\mM_{n,k}}\qty[(\widehat\rho_{AB,\xi_k}^{(n)})^{\otimes m_{\delta_0}(\mS_{n,k})}]}}{m_{\delta_0}(\mS_{n,k})}-\delta_0-\frac{\delta_0}{m_{\delta_0}(\mS_{n,k})})\frac{m_{\delta_0}(\mS_{n,k})l_{n,k}}{n}
\label{eq:achievable rate}
\eal
can be achieved with trace-distance error $2^{-c(\delta_0,d_{n,k})l_{n,k}}$, where $c$ is the function in \cref{lem:universal state merging}. 
Setting our target merging error $\varepsilon_n = 1/\poly(n)$ (for instance, $n^{-1}$), they check whether
\bal
 2^{-c(\delta_0,d_{n,k})l_{n,k}}< \varepsilon_n
\label{eq:fidelity test}
\eal
holds. 
If this is not satisfied, they reject this choice of $k$.

If $k$ passed all tests \eqref{eq:candidate inside D}, \eqref{eq:tomography accuracy feasibility}, \eqref{eq:rate test}, \eqref{eq:block size condition}, and \eqref{eq:fidelity test}, we call $k$ feasible.
Let $K_n^{\rm feas}\subset [0,n]$ be the set of feasible $k$'s. 
Alice and Bob choose
\bal
 k_n\coloneqq \max K_n^{\rm feas}.
 \label{eq:selected k}
\eal
If $K_n^{\rm feas}=\varnothing$, Alice and Bob keep the initial product output and terminate.  Importantly, the entire search in \ref{para:deciding parameters} is classical; rejecting a candidate $k$ does not consume any additional quantum copies.

\stepparagraph{Perturb input states}\label{para:perturb}
Suppose $K_n^{\rm feas}\neq\varnothing$.  
Alice and Bob apply the depolarizing noise to prepare $\rho_{AB,\xi_{k_n}}$ from $\rho_{AB}$ to each of the remaining $n-\alpha_n$ copies.  
Such depolarizing noise can be applied with shared randomness, which can be sent from Alice to Bob, and local depolarizing noise applied depending on the shared random bit.

\stepparagraph{Apply LOCC measurement}\label{para:LOCC}
They divide the resulting $n-\alpha_n$ perturbed copies into $l_{n,k_n}$ blocks of size $m_{\delta_0}(\mS_{n,k_n})$, where $l_{n,k_n}$ is given in \eqref{eq:candidate number blocks} with the $k_n$ chosen as in \eqref{eq:selected k}.
If $n-\alpha_n$ is not divisible by $m_{\delta_0}(\mS_{n,k_n})$, they discard the leftover copies, which is fewer than $m_{\delta_0}(\mS_{n,k_n})$.

They apply the LOCC measurement channel $\Lambda_{\mM_{n,k_n}}$ found in \ref{para:deciding parameters} to every block independently.
Since the perturbed input belongs to $\mS_{n,k_n}$, each block output belongs to the known set
\bal
 \mT_n\coloneqq \lset \Lambda_{\mM_{n,k_n}}(\sigma_{AB}^{\otimes m_{\delta_0}(\mS_{n,k_n})})\sbar \sigma_{AB}\in\mS_{n,k_n}\rset.
 \label{eq:state merging candidate set}
\eal
By \eqref{eq:Ic common continuity}, we get
\bal
 \min_{\tau\in\mT_n} \frac{I_c(A|BE)_\tau}{m_{\delta_0}(\mS_{n,k_n})}\geq  \frac{I_c(A|BE)_{\Lambda_{\mM_{n,k_n}}\qty[(\widehat\rho_{AB,\xi_{k_n}}^{(n)})^{\otimes m_{\delta_0}(\mS_{n,k_n})}]}}{m_{\delta_0}(\mS_{n,k_n})}-\delta_0.
 \label{eq:certified minimum Ic}
\eal

\stepparagraph{Apply universal state merging}\label{para:state merging}
As already described in \ref{para:deciding parameters}, Alice and Bob apply the state-merging protocol with partial information in \cref{lem:universal state merging} to the $l_{n,k_n}$ i.i.d. block outputs, using the known uncertainty set $\mT_n$. 
Specifically, they run the state merging by setting the target rate $r_n$ as the achievable one given in \eqref{eq:achievable rate} with $k=k_n$, i.e., 
\bal
 r_n\coloneqq \qty(\frac{I_c(A|BE)_{\Lambda_{\mM_{n,k_n}}\qty[(\widehat\rho_{AB,\xi_{k_n}}^{(n)})^{\otimes m_{\delta_0}(\mS_{n,{k_n}})}]}}{m_{\delta_0}(\mS_{n,k_n})}-\delta_0-\frac{\delta_0}{m_{\delta_0}(\mS_{n,k_n})})\frac{m_{\delta_0}(\mS_{n,k_n})l_{n,k_n}}{n}.
 \label{eq:revised target rate}
\eal
The trace-norm error of this state-merging step is at most $\varepsilon_n=1/\poly(n)$ by \eqref{eq:fidelity test}.

\subsubsection{Analysis} \label{sec:analysis}

We now analyze the above protocol.  
We first show that the maximum feasible $k$ determined at every $n$ is guaranteed to diverge in the asymptotic limit.
\begin{pro}\label{pro:maximum feasible diverges}
 Suppose that $E_d(\rho_{AB})>\delta$.
 For a series of good estimates $\qty{\widehat\rho_{AB}^{(n)}}_n$, $k_n$ defined in \eqref{eq:selected k} satisfies $k_n\to\infty$ as $n\to\infty$.  
\end{pro}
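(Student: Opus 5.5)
To show $k_n\to\infty$, I will show that for every fixed $K$ there is an $N_K$ such that $K$ is feasible for all $n\geq N_K$. Then $k_n=\max K_n^{\rm feas}\geq K$ for all $n\geq N_K$, and since $K$ is arbitrary, $k_n\to\infty$.

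Fix $K$ and set $\xi=\xi_K$. The perturbed state $\rho_{AB,\xi}$ has minimum eigenvalue at least $\xi/(d_Ad_B)>0$, so it is full rank. By \cref{lem:robustness} (or directly from the hypothesis $E_d(\rho_{AB})>\delta$), $E_d(\rho_{AB,\xi_k})$ is close to $E_d(\rho_{AB})$ for large $k$. So, possibly after discarding finitely many small $K$, which does not affect divergence, we may assume $E_d(\rho_{AB,\xi})>\delta$. Because $\mbD_{>0}$ is open (\cref{lem:pointwise continuity of distillable entanglement}), there is $r>0$ such that the compact ball $\mathcal{C}\coloneqq\{\sigma:\norm{\sigma-\rho_{AB,\xi}}_1\leq 2r\}$ lies in $\mbD_{>0}$. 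By \cref{lem:lower semicontinous}, shrinking $r$ if needed, we also have $E_d>\delta-\delta_0$ on $\mathcal{C}$.

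For good estimates, \eqref{eq:perturbed common radius} gives $\mS_{n,K}\subset\mathcal{C}$ as soon as $2\epsilon_n\leq 2r$, so test \eqref{eq:candidate inside D} passes. The threshold $m_{\delta_0}(\cdot)$ in \eqref{eq:uniform convergence threshold} is monotone under inclusion, so $m_{\delta_0}(\mS_{n,K})\leq \bar m\coloneqq m_{\delta_0}(\mathcal{C})<\infty$ by \cref{cor:uniform continuity of distillable entanglement}. Since $\omega_m(t)$ is increasing in $m$ and tends to $0$ as $t\to0$, test \eqref{eq:tomography accuracy feasibility} passes once $\omega_{\bar m}(\epsilon_n)\leq\delta_0$, which holds for large $n$ because $\epsilon_n\to0$.

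For the rate test \eqref{eq:rate test}, write $m=m_{\delta_0}(\mS_{n,K})$ and chain the bounds. The left side of \eqref{eq:near optimal candidate measurement} is at least $E_d^\leq(\widehat\rho^{\otimes m})/m-\delta_0$. By the definition of $m$, since $\widehat\rho_{AB,\xi}^{(n)}\in\mS_{n,K}$, this exceeds $E_d(\widehat\rho_{AB,\xi}^{(n)})-2\delta_0>\delta-3\delta_0$. Subtracting the remaining $2\delta_0$ (using $m\geq1$) leaves more than $\delta-5\delta_0=3\delta/8>0$, so the test passes.

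The main obstacle is the remaining tests \eqref{eq:block size condition} and \eqref{eq:fidelity test}. They involve $q_{n,K}$, the minimal number of outcomes of a near-optimal measurement for $\widehat\rho_{AB,\xi}^{(n)}$, which depends on $n$ through the estimate and is not obviously bounded. I would handle this by compactness. For each $\tau\in\mathcal{C}$ and each $m\leq\bar m$, pick a finite-outcome LOCC measurement that is $\delta_0/2$-optimal at $\tau^{\otimes m}$. By continuity of $I_c$ under a fixed channel and of $E_d^\leq$ (\cref{cor:uniform continuity of distillable entanglement}'s proof), that measurement stays $\delta_0$-optimal on a neighborhood of $\tau$. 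A finite subcover then gives a uniform bound $\bar q$ on $q_{n,K}$. Hence $d_{n,K}\leq \bar d\coloneqq(d_Ad_B)^{\bar m}\bar q$.

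The monotonicity of $n_0$ and $c$ in \cref{lem:universal state merging} then gives $n_0(\delta_0,d_{n,K})\leq n_0(\delta_0,\bar d)$ and $c(\delta_0,d_{n,K})\geq c(\delta_0,\bar d)>0$. Meanwhile $l_{n,K}\geq\lfloor (n-\alpha_n)/\bar m\rfloor$ grows linearly in $n$. So \eqref{eq:block size condition} holds for large $n$, and $2^{-c(\delta_0,\bar d)l_{n,K}}<1/\poly(n)$ eventually, which gives \eqref{eq:fidelity test}. This yields $N_K$ and completes the argument.
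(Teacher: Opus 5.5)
Your proposal is correct and follows the paper's proof essentially step by step. You fix $K$, place $\mS_{n,K}$ inside a fixed compact neighbourhood of $\rho_{AB,\xi_K}$ in $\mbD_{>0}$, and use monotonicity of $m_{\delta_0}(\cdot)$ to bound the block size. You then bound the outcome number $q_{n,K}$ and conclude from $l_{n,K}=\Omega(n)$ and the monotonicity of $n_0$ and $c$. There are two minor differences. You bound $q_{n,K}$ with a finite subcover of the whole ball $\mathcal{C}$, whereas the paper uses a single $\delta_0/2$-optimal measurement at the true perturbed state $\rho_{AB,\xi_k}$ that stays admissible for the converging estimates. Your explicit chain for the rate test, with margin $\delta-5\delta_0=3\delta/8$, also spells out a step the paper leaves terse.
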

\begin{proof}
    We first show that, for a sufficiently large but fixed integer $k$, it eventually becomes feasible in the limit of $n\to\infty$. 
We then employ this fact to show the desired statement.

Let $k$ be a sufficiently large but fixed integer.
How large this $k$ needs to be can depend on the true state $\rho_{AB}$, but it does not affect the feasibility of the universal protocol---the existence of such $k$ suffices for the argument. 
There are five tests \eqref{eq:candidate inside D}, \eqref{eq:tomography accuracy feasibility}, \eqref{eq:rate test}, \eqref{eq:block size condition}, and \eqref{eq:fidelity test} to pass for the feasibility, and we show the eventual feasibility of each test in the following.

\paragraph*{Eventual feasibility of \eqref{eq:candidate inside D}}
It is easy to see that \eqref{eq:candidate inside D} is eventually satisfied for a sufficiently large $n$ because the tomography accuracy keeps increasing with $n$. 

\paragraph*{Eventual feasibility of \eqref{eq:tomography accuracy feasibility}}
The nontrivial part of this condition is the behavior of $m_{\delta_0}(\mS_{n,k})$ with growing $n$---although $\epsilon_n$ decays as in \eqref{eq:common tomography radius} with $n$, \eqref{eq:tomography accuracy feasibility} may not be passed if $m_{\delta_0}(\mS_{n,k})$ also grows with $n$.
We address this by showing that $m_{\delta_0}(\mS_{n,k})$ can be upper bounded by a quantity independent of $n$, ensuring that it does not diverge in $n\to\infty$ limit. 
To see this, we first notice that \cref{lem:robustness} ensures $E_d(\rho_{AB,\xi_k})\to E_d(\rho_{AB})$ as $k\to\infty$, and thus every sufficiently large fixed $k$ satisfies $\rho_{AB,\xi_k}\in\mbD_{>0}$, where $\mbD_{>0}$ is the set of full-rank distillable states.  Since $\mbD_{>0}$ is open, there is a compact set $\mS_k\subset\mbD_{>0}$ whose interior contains $\rho_{AB,\xi_k}$.  
Again, Alice and Bob may not know the identity of $\mS_k$, but that is not a problem---we employ this for the sake of the proof. 
If $\widehat\rho_{AB}^{(n)}\in G_n(\rho_{AB})$, the center of $\mS_{n,k}$ is within $(1-\xi_k)\epsilon_n$ of $\rho_{AB,\xi_k}$ and the radius is $(1-\xi_k)\epsilon_n\xrightarrow[n\to\infty]{}0$.
Therefore,
\bal
 \mS_{n,k}\subset\mS_k
 \label{eq:candidate contained fixed compact}
\eal
for every sufficiently large $n$. 
We now realize that the definition \eqref{eq:uniform convergence threshold} ensures that 
\bal
\mS'\subset\mS\quad\Longrightarrow\quad m_\zeta(\mS')\leq m_\zeta(\mS).
 \label{eq:uniform convergence threshold monotonicity}
\eal
Consequently, we get
\bal
 m_{\delta_0}(\mS_{n,k})\leq m_{\delta_0}(\mS_k)<\infty
 \label{eq:fixed k block bound}
\eal
for all sufficiently large $n$.  
Since $m_{\delta_0}(\mS_{n,k})$ is upper bounded by constant, $\omega_{m_{\delta_0}(\mS_{n,k})}(2(1-\xi_k)\epsilon_n)$ can be made arbitrarily small by taking sufficiently large $n$, making $(1-\xi_k)\epsilon_n$ small. 
This ensures that \eqref{eq:tomography accuracy feasibility} is eventually satisfied. 

\paragraph*{Eventual feasibility of \eqref{eq:rate test}}
To see that the rate test \eqref{eq:rate test} is also passed for a fixed sufficiently large $k$ and in the limit of $n\to\infty$, it suffices to focus on the $\rho_{AB}$ such that $E_d(\rho_{AB})>\delta$.
Indeed, if $E_d(\rho_{AB})\leq \delta$, a product output satisfies both conditions \eqref{eq:error condition} and \eqref{eq:overflow probability} for universal entanglement distillation, and thus the failure of passing the test does not affect the feasibility of the protocol. 
If $E_d(\rho_{AB})>\delta$, since $E_d(\rho_{AB,\xi_k})\xrightarrow[k\to\infty]{} E_d(\rho_{AB})$ as ensured in \cref{lem:robustness}, a sufficiently large $k$ satisfies $E_d(\rho_{AB,\xi_k})>\delta$. 
Using \eqref{eq:uniform convergence threshold} and \eqref{eq:near optimal candidate measurement}, the choice of $\epsilon_n$ in \eqref{eq:common tomography radius} that decays with $n$, and that $m_{\delta_0}(\mS_{n,k})$ is upper bounded by a constant as in \eqref{eq:fixed k block bound} ensures that \eqref{eq:rate test} is passed for a sufficiently large $n$ for the fixed $k$.

\paragraph*{Eventual feasibility of \eqref{eq:fidelity test}}
Before going to \eqref{eq:block size condition}, we first show that \eqref{eq:fidelity test} is eventually satisfied. 
To see this, we first note that \eqref{eq:fixed k block bound} and the definition of $l_{n,k}$ in \eqref{eq:candidate number blocks} ensure that $l_{n,k}=\Omega(n)$.
Recalling that $\delta_0$ is a constant and $c(\delta_0,d_{n,k})$ in \eqref{eq:fidelity test} is non-increasing in $d_{n,k}$, it suffices to show that the output dimension $d_{n,k}=d_A^{m_{\delta_0}(\mS_{n,k})}d_B^{m_{\delta_0}(\mS_{n,k})}q_{n,k}$ defined in \eqref{eq:output dimension} is upper bounded by a constant with $n$. 
And since we already saw that $m_{\delta_0}(\mS_{n,k})$ is upper bounded by a constant, we can focus on $q_{n,k}$---the number of outcomes in the LOCC POVM.

To proceed, consider an arbitrary integer $m$ such that $m\in\{1,\ldots,m_{\delta_0}(\mS_k)\}$.
For each such $m$, choose a finite-output LOCC POVM $\mM_{k,m}^\star$ satisfying
\bal
\frac{I_c(A|BE)_{\Lambda_{\mM_{k,m}^\star}(\rho_{AB,\xi_k}^{\otimes m})}}{m}
 \geq \frac{E_d^\leq(\rho_{AB,\xi_k}^{\otimes m})}{m}-\frac{\delta_0}{2}.
 \label{eq:fixed k comparison measurement}
\eal
For fixed $m$, both sides of \eqref{eq:fixed k comparison measurement} are continuous in the state $\rho_{AB,\xi_k}$.   
Hence, whenever $m_{\delta_0}(\mS_{n,k})=m$, the same $\mM_{k,m}^\star$ satisfies \eqref{eq:near optimal candidate measurement} for every sufficiently large $n$.  
Since $\mM_{n,k}$ is chosen with the smallest outcome cardinality among the measurements satisfying \eqref{eq:near optimal candidate measurement}, there is a finite constant $q_k$, which can be taken as the maximum number of measurement outcomes over all $m_{\delta_0}(\mS_{n,k})\in\{1,\dots,m_{\delta_0}(\mS_k)\}$, such that
\bal
 q_{n,k}\leq q_k <\infty
 \label{eq:fixed k outcome bound}
\eal
for every sufficiently large $n$.
This shows that \eqref{eq:fidelity test} is satisfied for a sufficiently large $n$.

\paragraph*{Eventual feasibility of \eqref{eq:block size condition}}
We can now see that \eqref{eq:block size condition} can be satisfied too. 
This is because $l_{n,k}=\Omega(n)$ and $d_{n,k}$ is upper bounded by a constant. 
Since $n_0$ is non-decreasing with $d_{n,k}$, there is a sufficiently large $n$ such that $l_{n,k}>n_0(\delta_0,d_{n,k})$.

\vspace{.5cm}

The above observations now allow us to see that $k_n\to \infty$ as $n\to\infty$.
Let us fix an arbitrary sufficiently large integer $K$. 
The preceding argument shows that there exists $N_K$ such that $K$ is feasible for every $n\geq N_K$ whenever the estimate belongs to $G_n(\rho_{AB})$. Since $k_n$ is the largest feasible candidate, it follows that $k_n\geq K$ for every $n\geq N_K$. 
As $K$ is arbitrary, we get that $k_n\to\infty$. 
\end{proof}

\cref{pro:maximum feasible diverges} ensures that we can keep decreasing the perturbation strength while ensuring the target rate and error of the entanglement distillation. 
This, together with the analysis of the performance in \ref{para:state merging}, results in the following result, which concludes the proof of \cref{thm:universal distillable entanglement main}.

\begin{pro}\label{pro:analysis of protocol}
The protocol in Sec.~\ref{sec:protocol} achieves the distillable entanglement universally in the sense of \cref{defn:universal distillable entanglement}.
\end{pro}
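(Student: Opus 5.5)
We verify the two requirements of \cref{defn:universal distillable entanglement}: the average error \eqref{eq:error condition} and the overflow probability \eqref{eq:overflow probability}. Let $\rho_{AB}$ be the true state and $\delta>0$ the target accuracy fixed in \eqref{eq:delta}. The instrument outcome $x$ is the tomography record together with the outcomes of the LOCC measurements and the state-merging communication. For $x$, the target rate is $r_x^{(n)}=r_n$ from \eqref{eq:revised target rate} (which depends only on the tomography record) when $K_n^{\rm feas}\neq\varnothing$, and $r_x^{(n)}=0$ otherwise.

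\textbf{Error condition.} We split the error sum according to whether the estimate lies in $G_n(\rho_{AB})$.
\begin{itemize}
\item Bad estimates occur with probability at most $\eta_n=n^{-2}$. They contribute at most $2\eta_n$ to the average error.
\item For a good estimate with $K_n^{\rm feas}=\varnothing$, the output is the product state $\ket{0}\ket{0}$. This is exactly $\Phi^{\otimes 0}$, so the contribution is zero.
\item For a good estimate with $K_n^{\rm feas}\neq\varnothing$, \eqref{eq:perturbed common radius} shows that the true perturbed state $\rho_{AB,\xi_{k_n}}$ lies in $\mS_{n,k_n}$. Hence each block output lies in $\mT_n$, and the $l_{n,k_n}$ blocks are i.i.d.
\end{itemize}
In the last case we apply \cref{lem:universal state merging} with set $\mT_n$, system dimension $d_{n,k_n}$ and parameter $\delta_0/m_{\delta_0}(\mS_{n,k_n})$ per copy. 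This is admissible for three reasons. First, \eqref{eq:rate test} and \eqref{eq:certified minimum Ic} show that this parameter lies below $\inf_{\tau\in\mT_n}I_c(A|BE)_\tau$. Second, \eqref{eq:block size condition} gives the block-count condition. Third, Alice's system dimension is unchanged by $\Lambda_{\mM}$, and the number of merged copies is $l_{n,k_n}$.

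The output is then within $2^{-c\,l_{n,k_n}}<\varepsilon_n$ of $\psi\otimes\Phi_{d_{\rm out}}$ by \eqref{eq:fidelity test}. Tracing out everything except the maximally entangled register is monotone in trace norm, so this register is $\varepsilon_n$-close to $\Phi^{\otimes\floor{nr_n}}$ after truncating to the floor. That truncation is an LOCC step and does not increase the distance. Summing the three cases gives $\varepsilon_n(\rho_{AB})\le 2n^{-2}+n^{-1}\to0$ for every $\rho_{AB}$, with no need to know which case occurred.

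\textbf{Overflow probability.} If $E_d(\rho_{AB})\le\delta$, then $E_d(\rho_{AB})-\delta\le0\le r_x^{(n)}$ and there is nothing to prove. Assume $E_d(\rho_{AB})>\delta$. By \cref{pro:maximum feasible diverges}, on good estimates we have $K_n^{\rm feas}\neq\varnothing$ and $k_n\to\infty$ for large $n$. It then suffices to show
\[
r_n\ge E_d(\rho_{AB})-\delta
\]
for all large $n$ on good estimates; the failure probability is then at most $\eta_n\to0$.

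We chain the bounds as follows, writing $m=m_{\delta_0}(\mS_{n,k_n})$.
\begin{enumerate}
\item By \eqref{eq:near optimal candidate measurement}, $I_c/m$ at the estimate is at least $E_d^\le(\widehat\rho^{\,\otimes m}_{AB,\xi_{k_n}})/m-\delta_0$.
\item Since $\widehat\rho^{(n)}_{AB,\xi_{k_n}}\in\mS_{n,k_n}$, the definition \eqref{eq:uniform convergence threshold} gives $E_d^\le(\widehat\rho^{\,\otimes m}_{AB,\xi_{k_n}})/m\ge E_d(\widehat\rho^{(n)}_{AB,\xi_{k_n}})-\delta_0$.
\item Passing from the estimate to the true perturbed state costs one continuity step. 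We apply \eqref{eq:Ic common continuity} with \eqref{eq:tomography accuracy feasibility} to $E_d^\le/m$; this is legitimate because the same modulus bounds $E_d^\le/m$ through the form \eqref{eq:another form LOCC measurement}. Combined with the uniform-convergence bound, this gives $E_d(\widehat\rho^{(n)}_{AB,\xi_{k_n}})\ge E_d(\rho_{AB,\xi_{k_n}})-3\delta_0$ up to bookkeeping.
\item Subtract the extra $\delta_0+\delta_0/m\le2\delta_0$ built into $r_n$.
\end{enumerate}
Altogether $r_n\ge\bigl(E_d(\rho_{AB,\xi_{k_n}})-7\delta_0\bigr)\frac{m\,l_{n,k_n}}{n}$.

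It remains to control the two loose factors. Since $m\,l_{n,k_n}\ge n-\alpha_n-m$, we need $m/n\to0$. This holds because $m\le m_{\delta_0}(\mS_{k})$ for a fixed feasible $k$, by the monotonicity \eqref{eq:uniform convergence threshold monotonicity}; where $k_n$ varies, feasibility of \eqref{eq:fidelity test} forces $l_{n,k_n}\ge1$, but a sharper bound is needed. By \cref{lem:robustness}, $E_d(\rho_{AB,\xi_{k_n}})\to E_d(\rho_{AB})$. Then $8\delta_0=\delta$ and a margin argument give $r_n\ge E_d(\rho_{AB})-\delta$ eventually.

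\textbf{Main obstacle.} The hardest step is the rate factor $m\,l_{n,k_n}/n\to1$ when $k_n$ is chosen adaptively: $m_{\delta_0}(\mS_{n,k_n})$ could grow with $n$, since larger $k_n$ moves toward the boundary of $\mbD_{>0}$. I would first bound it from the feasibility test \eqref{eq:tomography accuracy feasibility}. That test gives $m\,\epsilon_n\log d_A\lesssim\delta_0$, so $m=O(\epsilon_n^{-1})=O(n^{1/4})=o(n)$ uniformly in $k$, which suffices.
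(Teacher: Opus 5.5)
Your proof is correct and follows the paper's argument. You use the same split of $\varepsilon_n(\rho_{AB})$ by the event $G_n(\rho_{AB})$ and the same admissibility check of \cref{lem:universal state merging} on $\mT_n$. You also use the same chain $r_n\ge\bigl(E_d(\rho_{AB,\xi_{k_n}})-7\delta_0\bigr)m\,l_{n,k_n}/n$, closed off with \cref{pro:maximum feasible diverges} and \cref{lem:robustness}. Your step 3, which applies the continuity modulus directly to $E_d^\le/m$ as a supremum over measurements, is a slightly more direct version of the paper's oscillation bound \eqref{eq:Ed oscillation candidate}.

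The one genuine difference is how you show $m_{\delta_0}(\mS_{n,k_n})/n\to0$.

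\textbf{The paper's route.} Contrary to your remark, \eqref{eq:fidelity test} forces much more than $l_{n,k_n}\ge1$. It gives $l_{n,k_n}>\log(1/\varepsilon_n)/c(\delta_0,d_{n,k_n})\ge\log(1/\varepsilon_n)/c(\delta_0,d_Ad_B)$, because $c$ is non-increasing in dimension and $d_{n,k_n}\ge d_Ad_B$. The paper then concludes $m/n\le 1/l_{n,k_n}\to0$.

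\textbf{Your route.} You read $2m(1-\xi_{k_n})\epsilon_n\log d_A\le\delta_0$ off \eqref{eq:tomography accuracy feasibility}. This is also valid, once you note two facts you did not state:
\begin{itemize}
\item $1-\xi_{k_n}\ge 1/2$. This holds because the candidate $k=1$ always fails \eqref{eq:candidate inside D}: $\mS_{n,1}$ consists only of the maximally mixed state, which is separable.
\item $d_A\ge2$. Otherwise $E_d\equiv 0$ and there is nothing to prove.
\end{itemize}

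\textbf{Comparison.} Your bound is quantitatively stronger, giving $m/n=O(n^{-3/4})$ rather than $O(1/\log n)$. It also does not rely on the monotonicity of the merging exponent $c(\delta,d)$ in $d$. However, it requires $n\epsilon_n\to\infty$, i.e., that the tomography radius not be chosen too small. The paper's route only needs $\varepsilon_n\to0$.

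\textbf{A minor bookkeeping point.} Do not include the outcomes of $\Lambda_{\mM_{n,k_n}}$ or of the merging step in the instrument label $x$. \cref{lem:universal state merging} controls only the output averaged over those outcomes, and the register $E$ is part of the merged input. Let $x$ be the tomography record, on which $r_x^{(n)}$ alone depends, as the paper does. Alternatively, note that refining $x$ costs only a square root in the error, since the target is pure.
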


\begin{proof}
Let $k_n\coloneqq\max K_n^{\rm feas}$ as in \eqref{eq:selected k}.  
By \eqref{eq:uniform convergence threshold}, it holds that
\bal
 0\leq E_d(\sigma_{AB})-\frac{E_d^\leq(\sigma_{AB}^{\otimes m_{\delta_0}(\mS_{n,k_n})})}{m_{\delta_0}(\mS_{n,k_n})}<\delta_0,
 \qquad \forall \sigma_{AB}\in\mS_{n,k_n}.
 \label{eq:uniform gap candidate}
\eal
Moreover, \eqref{eq:tomography accuracy feasibility} and \eqref{eq:Ic common continuity} imply that, for every LOCC measurement channel $\Lambda_\mM$ and every $\sigma_{AB}^{(1)},\sigma_{AB}^{(2)}\in\mS_{n,k_n}$,
\bal
 \abs{\frac{I_c(A|BE)_{\Lambda_\mM\qty((\sigma_{AB}^{(1)})^{\otimes m_{\delta_0}(\mS_{n,k_n})})}}{m_{\delta_0}(\mS_{n,k_n})}-\frac{I_c(A|BE)_{\Lambda_\mM\qty((\sigma_{AB}^{(2)})^{\otimes m_{\delta_0}(\mS_{n,k_n})})}}{m_{\delta_0}(\mS_{n,k_n})}}\leq\delta_0.
 \label{eq:pairwise Ic candidate}
\eal

For arbitrary $\sigma_{AB}^{(1)},\sigma_{AB}^{(2)}\in\mS_{n,k_n}$, choose a finite-outcome LOCC POVM $\mM^{(2)}$ that is $\delta_0$-optimal for $E_d^\leq((\sigma_{AB}^{(2)})^{\otimes m_{\delta_0}(\mS_{n,k_n})})$.  Then
\bal E_d(\sigma_{AB}^{(1)})
 &\geq \frac{E_d^\leq((\sigma_{AB}^{(1)})^{\otimes m_{\delta_0}(\mS_{n,k_n})})}{m_{\delta_0}(\mS_{n,k_n})} \\
 &\geq \frac{I_c(A|BE)_{\Lambda_{\mM^{(2)}}\qty[(\sigma_{AB}^{(1)})^{\otimes m_{\delta_0}(\mS_{n,k_n})}]}}{m_{\delta_0}(\mS_{n,k_n})} \\
 &\geq \frac{I_c(A|BE)_{\Lambda_{\mM^{(2)}}\qty[(\sigma_{AB}^{(2)})^{\otimes m_{\delta_0}(\mS_{n,k_n})}]}}{m_{\delta_0}(\mS_{n,k_n})}-\delta_0 \\
 &\geq E_d(\sigma_{AB}^{(2)})-3\delta_0,
 \label{eq:Ed oscillation one direction}
\eal
where the first line follows from \eqref{eq:uniform gap candidate}, the second line is because $E_d^\leq$ is obtained by maximizing over LOCC measurements as in \eqref{eq:another form LOCC measurement}, the third line is due to \eqref{eq:pairwise Ic candidate}, and the fourth line comes from that $\mM^{(2)}$ is $\delta_0$-optimal for $E_d^\leq\qty((\sigma_{AB}^{(2)})^{\otimes m_{\delta_0}(\mS_{n,k_n})})$ together with \eqref{eq:uniform gap candidate}.
Also, interchanging $\sigma_{AB}^{(1)}$ and $\sigma_{AB}^{(2)}$ yields
\bal
 \abs{E_d(\sigma_{AB}^{(1)})-E_d(\sigma_{AB}^{(2)})}\leq3\delta_0,
 \qquad \forall \sigma_{AB}^{(1)},\sigma_{AB}^{(2)}\in\mS_{n,k_n},
 \label{eq:Ed oscillation candidate}
\eal

The choice \eqref{eq:near optimal candidate measurement} and \eqref{eq:uniform gap candidate} imply
\bal
\frac{I_c(A|BE)_{\Lambda_{\mM_{n,k_n}}}\qty(\qty[\widehat\rho_{AB,\xi_k}^{(n)}]^{\otimes m_{\delta_0}(\mS_{n,k_n})})}{m_{\delta_0}(\mS_{n,k_n})}
 \geq E_d(\widehat{\rho}_{AB,\xi_k}^{(n)})-2\delta_0.
\eal
Therefore, the rate $r_n$ in \eqref{eq:revised target rate} achieved in \ref{para:state merging} satisfies
\bal
r_n \geq \qty(E_d(\widehat{\rho}_{AB,\xi_{k_n}}^{(n)})-3\delta_0-\frac{\delta_0}{m_{\delta_0}(\mS_{n,k_n})})\frac{l_{n,k_n}m_{\delta_0}(\mS_{n,k_n})}{n}.
 \label{eq:L versus estimate}
\eal
If $\widehat{\rho}_{AB}^{(n)}\in G_n(\rho_{AB})$, Eq.~\eqref{eq:perturbed common radius} implies $\rho_{AB,\xi_k}\in\mS_{n,k_n}$.
Noting also that $\widehat{\rho}_{AB,\xi_k}^{(n)}\in\mS_{n,k_n}$, \eqref{eq:Ed oscillation candidate}, together with \eqref{eq:L versus estimate}, gives
\bal
 r_n \geq \qty(E_d(\rho_{AB,\xi_{k_n}})-6\delta_0-\frac{\delta_0}{m_{\delta_0}(\mS_{n,k_n})})\frac{l_{n,k_n}m_{\delta_0}(\mS_{n,k_n})}{n}.
 \label{eq:L versus regularized true state}
\eal

We now realize that 
\bal
 \frac{l_{n,k_n}m_{\delta_0}(\mS_{n,k_n})}{n}\geq 1 - \frac{\alpha_n}{n} - \frac{m_{\delta_0}(\mS_{n,k_n})}{n} \xrightarrow[n\to\infty]{} 1 .
\eal
Here, the second term vanishes because $\alpha_n=o(n)$.
The third term also vanishes as 
\bal
 \frac{m_{\delta_0}(\mS_{n,k_n})}{n}\leq \frac{1}{l_{n,k}}\leq \frac{c(\delta_0,d_{n,k_n})}{\log(1/\varepsilon_n)}\leq \frac{c(\delta_0,d_Ad_B)}{\log(1/\varepsilon_n)}\xrightarrow[n\to\infty]{}0
\eal
where the second inequality follows from \eqref{eq:fidelity test}, the third inequality is because $c$ is non-increasing in $d_{n,k}$ and $d_{n,k}\geq d_Ad_B$, and the limit follows because $\varepsilon_n=1/\poly(n)$.
Also, since $m_{\delta_0}(\mS_{n,k_n})$ is a positive integer, we have $\frac{\delta_0}{m_{\delta_0}(\mS_{n,k_n})}\leq \delta_0$.

Finally, \cref{pro:maximum feasible diverges} ensures $\xi_{k_n}\xrightarrow[n\to\infty]{}0$.
Therefore, \cref{lem:robustness} implies
\bal
E_d(\rho_{AB,\xi_{k_n}})\xrightarrow[n\to\infty]{} E_d(\rho_{AB}).
 \label{eq:random regularization removal}
\eal
These show that the achievable rate $r_n$ for a good estimate in $G_n(\rho_{AB})$ satisfies 
\bal
 \liminf_{n\to\infty}r_n \geq E_d(\rho_{AB})-7\delta_0.
 \label{eq:achievable rate limit}
\eal

We are now ready to verify the conditions \eqref{eq:error condition} and \eqref{eq:overflow probability}. 
Eq.~\eqref{eq:overflow probability} can be seen by \eqref{eq:achievable rate limit} together with that the probability of having a good estimate in $G_n(\rho_{AB})$ approaches 1 because of the choice of $\eta_n$ in \eqref{eq:parameter tomography}. 
Specifically the choice of $\delta_0$ in \eqref{eq:delta_0} gives
\bal
 \Pr_x\qty[r_x^{(n)}<E_d(\rho_{AB})-\delta]\xrightarrow[n\to\infty]{}0.
 \label{eq:revised overflow proof}
\eal
To see \eqref{eq:error condition}, we note that if $\widehat\rho_{AB}^{(n)}\in G_n(\rho_{AB})$, its state-merging error is at most $\varepsilon_n=1/\poly(n)$ by construction.  
If no candidate is selected, the output rate is defined to be zero and the protocol retains the target product state exactly.  
If $\widehat\rho_{AB}^{(n)}\not\in G_n(\rho_{AB})$, the trace distance is at most $2$, and this only occurs with probability at most $\eta_n=n^{-2}$.  
Hence,
\bal \varepsilon_n(\rho_{AB})\leq \varepsilon_n +2\eta_n\xrightarrow[n\to\infty]{}0,
 \label{eq:revised average error}
\eal
which proves \eqref{eq:error condition}.

Finally, if $E_d(\rho_{AB})\leq \delta$, the condition \eqref{eq:overflow probability} holds automatically. 
The condition \eqref{eq:error condition} follows from \eqref{eq:revised average error}.
\end{proof}

\subsubsection{Remark on one-way distillable entanglement}

We remark that, if we replace the two-way LOCC measurement $\Lambda_\mM$ with one-way LOCC measurement, our protocol achieves the one-way distillable entanglement $E_{d,\rightarrow}(\rho_{AB})$ universally, because the one-way distillable entanglement can also be characterized as in \eqref{eq:distillable entanglement two-way formal} and \eqref{eq:another form LOCC measurement} by replacing two-way LOCC with one-way LOCC. 
Strictly speaking, this universal protocol itself is a two-way LOCC protocol because Bob needs to send the tomographic data back to Alice in \ref{para:tomography} so that they agree on the common state estimate. 
However, the rate of this classical communication vanishes in the asymptotic limit because of the same reasoning in \cref{thm:universal entanglement distillation hashing}.
Therefore, our protocol achieves the distillable entanglement universally in an ``almost'' one-way LOCC.

\end{document}